\newtheorem{theorem}{Theorem}[section]
\newtheorem{lemma}[theorem]{Lemma}
\newtheorem{definition}[theorem]{Definition}
\newtheorem{proposition}[theorem]{Proposition}
\theoremstyle{definition}
\newtheorem{example}[theorem]{Example}
\begin{document}

\newcommand{\schedname}{{\sc{SCHED}}\xspace}
\newcommand{\schedlongname}{$1|\textrm{prec}|\sum C_i$\xspace}
\newcommand{\czas}[1]{t(#1)}
\newcommand{\koszt}[1]{T(#1)}
\newcommand{\tudu}[1]{{\bf{TODO: #1}}}
\newcommand{\eps}{\varepsilon}
\newcommand{\pinezka}{\textrm{nil}}
\newcommand{\pred}{{\ensuremath{pred}}}
\newcommand{\succc}{{\ensuremath{succ}}}
\renewcommand{\subset}{\subseteq}
\newcommand{\Oh}{\ensuremath{O}}
\newcommand{\Ohstar}{\ensuremath{\Oh^\ast}}

\newcommand{\Wvc}{M}
\newcommand{\Whalf}{W_{\mathrm{half}}}
\newcommand{\Wquarter}{W_{\mathrm{quarter}}}
\newcommand{\matching}{\mathcal{M}}

\newcommand{\defproblemnoparam}[3]{
  \vspace{1mm}
\noindent\fbox{
  \begin{minipage}{\textwidth}  
  #1 \\ 
  {\bf{Input:}} #2  \\
  {\bf{Task:}} #3
  \end{minipage}
  }
  \vspace{1mm}
}

\definecolor{light-gray}{gray}{0.8}

\title{Scheduling partially ordered jobs faster than $2^n$\thanks{An extended abstract of this paper appears at 19th European Symposium on Algorithms, Saarbr\"{u}cken, Germany, 2011.}}

  \author{Marek Cygan\thanks{Institute of Informatics, University of Warsaw, Poland, \texttt{cygan@mimuw.edu.pl}. Supported by Polish Ministry of Science grant no. N206 355636 and Foundation for Polish Science.} \and
	Marcin Pilipczuk\thanks{Institute of Informatics, University of Warsaw, Poland, \texttt{malcin@mimuw.edu.pl}. Supported by Polish Ministry of Science grant no. N206 355636 and Foundation for Polish Science.} \and
	Micha\l{} Pilipczuk\thanks{Faculty of Mathematics, Informatics and Mechanics, University of Warsaw, Poland, \texttt{michal.pilipczuk@students.mimuw.edu.pl}} \and
	Jakub Onufry Wojtaszczyk\thanks{Google Inc., Cracow, Poland, \texttt{onufry@google.com}}}

\date{}

\maketitle

\begin{abstract}
In a scheduling problem, denoted by \schedlongname in the Graham notation,
we are given a set of $n$ jobs, together with their processing
times and precedence constraints. The task is to order the jobs so that their total completion time is minimized.
\schedlongname is a special case of the Traveling Repairman Problem with precedences.
A natural dynamic programming algorithm solves both these problems in $2^n n^{\Oh(1)}$ time, and whether
there exists an algorithms solving \schedlongname{} in $\Oh(c^n)$ time for some constant $c<2$
was an open problem posted in 2004 by Woeginger. In this paper we answer this question positively.
%\keywords{moderately exponential algorithms, jobs scheduling, jobs with precedences}
\end{abstract}

\section{Introduction}

It is commonly believed that no NP-hard problem is solvable in polynomial time.
However, while all NP-complete problems are equivalent with respect to polynomial time reductions,
they appear to be very different with respect to the best exponential time exact solutions.
In particular, most NP-complete problems can be solved significantly faster than
the (generic for the NP class) obvious brute-force algorithm that checks all possible solutions;
examples are {\sc Independent Set}~\cite{fgk:m-c-jacm}, {\sc Dominating Set}~\cite{fgk:m-c-jacm,rooij:domsetesa09}, {\sc Chromatic Number}~\cite{bjohus:color} and
{\sc Bandwidth}~\cite{nasz-tcs}.
The area of moderately exponential time algorithms studies upper and lower bounds
for exact solutions for hard problems.
The race for the fastest exact algorithm inspired several very interesting tools
and techniques such as Fast Subset Convolution~\cite{bjohus:fourier} and Measure\&Conquer~\cite{fgk:m-c-jacm}
(for an overview of the field we refer the reader to a recent book by Fomin and Kratsch~\cite{fomin-book}).

For several problems, including {\sc TSP}, {\sc Chromatic Number}, {\sc Permanent},
{\sc Set Cover}, {\sc \#Hamiltonian Cycles} and {\sc SAT}, the currently best known time complexity
is of the form\footnote{The $\Ohstar()$ notation suppresses factors polynomial in the input size.}
$\Ohstar(2^n)$, which is a result of applying dynamic programming over subsets,
the inclusion-exclusion principle or a brute force search.
The question remains, however, which of those problems are inherently so hard that it is
not possible to break the $2^n$ barrier and which are just waiting for new tools and techniques still to be discovered.
In particular, the hardness of the $k$-{\sc SAT} problem is the starting point for the
Strong Exponential Time Hypothesis of Impagliazzo and Paturi~\cite{seth},
which is used as an argument that other problems are hard~\cite{cut-and-count,treewidth-lower,patrascu}.
Recently, on the positive side, $\Oh(c^n)$ time algorithms for a constant $c<2$ have been developed for
{\sc Capacitated Domination}~\cite{capdomset}, {\sc Irredundance}~\cite{irrset},
{\sc Maximum Induced Planar Subgraph}~\cite{fomin-esa11} and
(a major breakthrough in the field)
for the undirected version of the {\sc Hamiltonian Cycle} problem~\cite{bjorklund-focs}.

In this paper we extend this list by one important scheduling problem.
The area of scheduling algorithms originates from practical questions regarding scheduling jobs on single- or multiple-processor
machines or scheduling I/O requests. It has quickly become one of the most important areas in algorithmics, with significant influence on other branches of computer science.
For example, the research of the job-shop scheduling problem in 1960s resulted
in designing the competitive analysis \cite{graham:jobshop}, initiating the research of
online algorithms.
Up to today, the scheduling literature consists of thousands of research publications.
We refer the reader to the classical textbook of Brucker \cite{sched:book}.

Among scheduling problems one may find a bunch of problems solvable in polynomial time, as well as many NP-hard ones.
For example, the aforementioned job-shop problem is NP-complete on at least three machines \cite{job-shop:3},
but polynomial on two machines with unitary processing times \cite{job-shop:2}.

Scheduling problems come in numerous variants. For example, one may consider
scheduling on one machine, or many uniform or non-uniform machines.
The jobs can have different attributes: they may arrive at different times,
may have deadlines or precedence constraints, preemption may or may not be allowed.
There are also many objective functions, for example the makespan of the computation,
total completion time, total lateness (in case of deadlines for jobs) etc.

Let us focus on the case of a single machine. Assume we are given a set of jobs $V$,
and each job $v$ has its processing time $t(v) \in [0,+\infty)$.
For a job $v$, its completion time is the total amount of time that this job waited
to be finished; formally, the completion time of a job $v$ is defined as the sum
of processing times of $v$ and all jobs scheduled earlier.
If we are to minimize the total completion time (i.e, the sum of completion times
over all jobs), it is clear that the jobs should be scheduled in order of increasing
processing times.
The question of minimizing the makespan of the computation (i.e., maximum completion time)
is obvious in this setting, but we note that minimizing makespan is polynomially solvable even
if we are given a precedence constraints on the jobs (i.e., a partial order on the set of jobs
is given, and a job cannot be scheduled before all its predecessors in the partial order
are finished) and jobs arrive at different times (i.e., each job has its arrival time,
  before which it cannot be scheduled) \cite{lawler:sched}.

Lenstra and Rinnooy Kan \cite{lenstra} in 1978 proved that
the question of minimizing total completion time on one machine becomes NP-complete
if we are given precedence constraints on the set of jobs.
To the best of our knowledge the currently smallest approximation ratio for this case equals $2$,
due to independently discovered algorithms by Chekuri and Motwani~\cite{chekuri} as well as Margot et al.~\cite{margot}.
The problem of minimizing total completion time on one machine, given
precedence constraints on the set of jobs, can be solved
by a standard dynamic programming algorithm in time $\Ohstar(2^n)$, where $n$ denotes the
number of jobs.
In this paper we break the $2^n$-barrier for this problem.

Before we start, let us define formally the considered problem.
As we focus on a single scheduling problem, for brevity we
denote it by \schedname{}. We note that the proper name of this problem in the Graham notation
    is \schedlongname{}.

\defproblemnoparam{\schedname{}}
{A partially ordered set of jobs $(V, \leq)$,
together with a nonnegative processing time $\czas{v} \in [0,+\infty)$ for each job $v \in V$.
}
{
Compute a bijection $\sigma:V \to \{1,2,\ldots,|V|\}$ (called an {\em{ordering}})
that satisfies the precedence constraints (i.e., if $u < v$, then $\sigma(u) < \sigma(v)$)
and minimizes the total completion time of all jobs defined as
$$\koszt{\sigma} = \sum_{v \in V}\ \ \sum_{u:\sigma(u) \leq \sigma(v)} \czas{u} = \sum_{v \in V} (|V|-\sigma(v)+1)\czas{v}.$$
}

If $u < v$ for $u,v \in V$ (i.e., $u \leq v$ and $u \neq v$),
we say that $u$ {\em{precedes}} $v$, $u$ is a {\em{predecessor}}
or {\em{prerequisite}} of $v$, $u$ is {\em{required}} for $v$ or that $v$ is a {\em{successor}}
of $u$. We denote $|V|$ by $n$.

\schedname{} is a special case of the precedence constrained Travelling Repairman Problem (prec-{\sc TRP}), defined as follows. A repairman needs to visit all vertices of a (directed
or undirected) graph $G=(V,E)$ with distances $d:E \to [0,\infty)$ on edges.
At each vertex, the repairman is supposed to repair a broken machine;
a cost of a machine $v$ is the time $C_v$ that it waited before being repaired.
Thus, the goal is to minimize the total repair time, that is, $\sum_{v \in V} C_v$.
Additionally, in the precedence constrained case, we are given a partial order $(V,\leq)$
on the set of vertices of $G$; a machine can be repaired only if all its predecessors
are already repaired.
Note that, given an instance $(V,\leq,t)$ of \schedname{},
we may construct equivalent prec-{\sc{TRP}} instance, by taking $G$ to be a complete directed
graph on the vertex set $V$, keeping the precedence constraints unmodified, and setting
$d(u,v) = t(v)$.

The {\sc{TRP}} problem is closely related to the Traveling Salesman Problem ({\sc{TSP}}).
All these problems are NP-complete and solvable in $\Ohstar(2^n)$ time by
an easy application of the dynamic programming approach (here $n$ stands for the number
of vertices in the input graph).
In 2010, Bj\"{o}rklund \cite{bjorklund-focs} discovered a genuine way to solve
probably the easiest NP-complete version of the {\sc{TSP}} problem --- the question
of deciding whether a given undirected graph is Hamiltonian --- in randomized
$\Oh(1.66^n)$ time. However, his approach does not extend to directed graphs,
not even mentioning graphs with distances defined on edges.

Bj\"{o}rklund's approach is based on purely graph-theoretical and combinatorial reasonings,
and seem unable to cope with arbitrary (large, real) weights (distances, processing times).
This is also the case with many other combinatorial approaches.
Probably motivated by this,
Woeginger at International Workshop on Parameterized and Exact Computation
(IWPEC) in 2004~\cite{woeginger04} has posed the question (repeated in 2008~\cite{woeginger08}),
whether it is possible to construct an $\Oh((2-\eps)^n)$ time algorithm for the \schedname{} problem\footnote{Although Woeginger in his papers asks for an $\Oh(1.99^n)$ algorithm, the intention is clearly to ask for an $\Oh((2-\eps)^n)$ algorithm.}.
This problem seems to be the easiest case of the aforementioned
family of {\sc{TSP}}-related problems with arbitrary weights.
In this paper we present such an algorithm, thus affirmatively answering Woeginger's question.
Woeginger also asked~\cite{woeginger04,woeginger08} whether an $\Oh((2-\eps)^n)$ time algorithm
for one of the problems {\sc TRP}, {\sc TSP}, prec-{\sc TRP}, \schedname{} implies $\Oh((2-\eps)^n)$
time algorithms for the other problems. This problem is still open.

The most important ingredient of our algorithm is a combinatorial lemma (Lemma~\ref{lem:core})
which allows us to investigate the structure of the \schedname{} problem.
We heavily use the fact that we are solving the \schedname{} problem and
not its more general TSP related version, and for this reason
we believe that obtaining $\Oh((2-\eps)^n)$ time algorithms for other problems
listed by Woeginger is much harder.

\section{The algorithm}

\subsection{High-level overview --- part 1}\label{sec:high-level1}
Let us recall that our task in the \schedname{} problem is to compute an ordering $\sigma:V \to \{1,2,\ldots,n\}$
that satisfies the precedence constraints (i.e., if $u < v$ then $\sigma(u) < \sigma(v)$)
and minimizes the total completion time of all jobs defined as
$$\koszt{\sigma} = \sum_{v \in V}\ \ \sum_{u:\sigma(u) \leq \sigma(v)} \czas{u} = \sum_{v \in V} (n-\sigma(v)+1)\czas{v}.$$
We define {\em{the cost of job $v$ at position $i$}} to be $\koszt{v, i} = (n-i+1)\czas{v}$.
Thus, the total completion time is the total cost of all jobs at their respective
positions in the ordering $\sigma$.

We begin by describing the algorithm that solves \schedname in $O^\star(2^n)$ time, which we call {\em{the DP algorithm}}
--- this will be the basis for our further work.
The idea --- a standard dynamic programming over subsets --- is that if we decide that a particular
set $X \subset V$ will (in some order) form the prefix of our optimal $\sigma$, then the order
in which we take the elements of $X$ does not affect the choices we make regarding the ordering
of the remaining $V\setminus X$; the only thing that matters are the precedence constraints
imposed by $X$ on $V\setminus X$. Thus, for each candidate set $X \subseteq V$ to form a prefix,
the algorithm computes a bijection
$\sigma[X]:X \to \{1,2,\ldots,|X|\}$ that minimizes the cost of jobs from $X$, i.e.,
it minimizes $\koszt{\sigma[X]} = \sum_{v \in X} \koszt{v, \sigma[X](v)}$.
The value of $\koszt{\sigma[X]}$ is computed using the following easy to check recursive formula:
\begin{equation}\label{eqn:reccost} \koszt{\sigma[X]} = \min_{v \in \max(X)} \left[\koszt{\sigma[X \setminus \{v\}]} + \koszt{v, |X|}\right].\end{equation}
Here, by $\max(X)$ we mean the set of maximum elements of $X$ --- those which do not precede any element of $X$.
The bijection $\sigma[X]$ is constructed by prolonging $\sigma[X\setminus\{v\}]$ by $v$,
where $v$ is the job at which the minimum is attained.
Notice that $\sigma[V]$ is exactly the ordering we are looking for.
We calculate $\sigma[V]$ recursively, using formula (\ref{eqn:reccost}),
storing all computed values $\sigma[X]$ in memory to avoid recomputation.
Thus, as the computation of a single $\sigma[X]$ value given all the smaller values
takes polynomial time, while $\sigma[X]$ for each $X$ is computed at most once
the whole algorithm indeed runs in $O^\star(2^n)$ time.

The overall idea of our algorithm is to identify a family of sets $X \subset V$ that --- for some
reason --- are not reasonable prefix candidates, and we can skip them in the computations of the
DP algorithm; we will call these {\em unfeasible sets}. If the number of feasible sets is not
larger than $c^n$ for some $c < 2$, we will be done --- our recursion will visit only
feasible sets, assuming $T(\sigma[X])$ to be $\infty$ for unfeasible $X$ in formula
(\ref{eqn:reccost}), and the running time will be $O^\star(c^n)$. This is formalized
in the following proposition.

\begin{proposition}\label{prop:cut-dp}
Assume we are given a polynomial-time algorithm $\mathcal{R}$ that, given a set $X \subseteq V$,
either accepts it or rejects it. Moreover, assume that the number of sets accepted by $\mathcal{R}$
is bounded by $\Oh(c^n)$ for some constant $c$. Then one can find in time $O^\star(c^n)$ an optimal
ordering of the jobs in $V$ among those orderings $\sigma$ where $\sigma^{-1}(\{1,2,\ldots,i\})$ is
accepted by $\mathcal{R}$ for all $1 \leq i \leq n$, whenever such ordering exists.
\end{proposition}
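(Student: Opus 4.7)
The plan is to take the DP algorithm described above and restrict it to run only over accepted sets. For each accepted $X$ redefine $\koszt{\sigma[X]}$ to be the minimum cost of an ordering of $X$ all of whose non-empty prefixes are accepted by $\mathcal{R}$, set $\koszt{\sigma[\emptyset]} = 0$ as the base case, and in the recurrence (\ref{eqn:reccost}) further restrict the minimum to those $v \in \max(X)$ for which $X \setminus \{v\}$ is either empty or accepted by $\mathcal{R}$. A routine induction on $|X|$ shows that the modified DP computes $\koszt{\sigma[V]}$ equal to the minimum cost among orderings as in the proposition (or $+\infty$ if none exists); the key observation is that in any such ordering the $(|X|-1)$-element prefix is itself accepted and equals $X \setminus \{v\}$ for some $v$ maximal in $X$. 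The optimal ordering is reconstructed by standard backtracking, storing at each accepted $X$ the minimizer $v$.

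To implement the recurrence within the promised time bound I evaluate the DP in a push style, processing accepted sets in order of increasing cardinality. I maintain a hash table keyed by bit-mask representations of subsets of $V$, storing $\koszt{\sigma[X]}$ for every accepted $X$ discovered so far; it is initialized with $\emptyset$ mapped to $0$. When processing an accepted set $X$ with $\koszt{\sigma[X]} < +\infty$, I enumerate every job $v \in V \setminus X$ that is minimal in $V \setminus X$ under the precedence order (so that $v \in \max(X \cup \{v\})$ and the precedence constraints are not violated), run $\mathcal{R}$ on $X \cup \{v\}$, and if it is accepted, insert $X \cup \{v\}$ into the hash table (if not already present) and relax its tentative DP value using $\koszt{\sigma[X]} + \koszt{v, |X|+1}$. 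After all accepted sets of size at most $n-1$ have been processed, the answer is read off from $\koszt{\sigma[V]}$.

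The main subtlety, and the reason I use a push rather than pull formulation, is that the family of accepted sets is given only implicitly through $\mathcal{R}$, so we cannot iterate over it directly without risking too much work on rejected sets. By expanding outward from already-known accepted sets and charging each invocation of $\mathcal{R}$ to the accepted set that triggered it, the total number of calls stays within $n$ times the number of accepted sets. By hypothesis there are at most $\Oh(c^n)$ accepted sets, and for each we perform at most $n$ calls to $\mathcal{R}$ and at most $n$ hash-table operations, each of polynomial cost; hence the overall running time is $\Ohstar(c^n)$, as required.
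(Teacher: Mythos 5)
Your proposal is correct and is essentially the paper's own argument: a dynamic program over formula (\ref{eqn:reccost}) restricted to $\mathcal{R}$-accepted sets, with each call to $\mathcal{R}$ charged to an accepted set so that the total work is $n$ times the number of accepted sets, i.e.\ $\Ohstar(c^n)$. The only difference is that you evaluate the table bottom-up (push) while the paper uses top-down memoization that returns $\infty$ immediately on rejected sets, which resolves the ``implicit family'' concern in the same way; this is an implementation detail, not a different proof.
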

\begin{proof}
Consider the following recursive procedure to compute optimal $\koszt{\sigma[X]}$ for a given set $X \subseteq V$:
\begin{enumerate}
\item if $X$ is rejected by $\mathcal{R}$, return $\koszt{\sigma[X]} = \infty$;
\item if $X = \emptyset$, return $\koszt{\sigma[X]} = 0$;
\item if $\koszt{\sigma[X]}$ has been already computed, return the stored value of $\koszt{\sigma[X]}$;
\item otherwise, compute $\koszt{\sigma[X]}$ using formula \eqref{eqn:reccost}, calling recursively the procedure
itself to obtain values $\koszt{\sigma[X \setminus \{v\}]}$ for $v \in \max(X)$, and store the computed value
for further use.
\end{enumerate}
Clearly, the above procedure, invoked on $X = V$, computes optimal $\koszt{\sigma[V]}$ among those orderings
$\sigma$ where $\sigma^{-1}(\{1,2,\ldots,i\})$ is accepted by $\mathcal{R}$ for all $1 \leq i \leq n$.
It is straightforward to augment this procedure to return the ordering $\sigma$ itself, instead of only its cost.

If we use balanced search tree to store the computed values of $\sigma[X]$,
each recursive call of the described procedure runs in polynomial time.
Note that the last step of the procedure is invoked at most once for each set $X$ accepted by $\mathcal{R}$
and never for a set $X$ rejected by $\mathcal{R}$.
As an application of this step results in at most $|X| \leq n$ recursive calls,
we obtain that a computation of $\sigma[V]$ using this procedure results in the number
of recursive calls bounded by $n$ times the number of sets accepted by $\mathcal{R}$. The time bound follows.
%As discussed before, calculate $\sigma[V]$ recursively, using formula (\ref{eqn:reccost}),
%storing all computed values $\sigma[X]$ in memory to avoid recomputation.
%Whenever we access a value $\koszt{\sigma[X]}$ for a set $X$ not accepted by $\mathcal{R}$,
%we take $\koszt{\sigma[X]} = \infty$. As each application of the formula (\ref{eqn:reccost}) gives at most $n$ recursive calls, the bound follows.
\end{proof}

\subsection{The large matching case}

We begin by noticing that the DP algorithm needs to compute $\sigma[X]$ only for those
$X \subseteq V$ that are downward closed, i.e., if $v \in X$ and $u < v$ then $u \in X$.
If there are many constraints in our problem, this alone will suffice to limit the number
of feasible sets considerably, as follows.
Construct an undirected graph $G$ with the vertex set $V$ and edge set $E = \{uv: u < v \vee v < u\}$.
Let $\matching$ be a maximum matching\footnote{Even an inclusion-maximal matching, which can be found greedily, is enough.}
in $G$, which can be found in polynomial time \cite{mucha-sankowski:matching}.
If $X \subseteq V$ is downward closed, and $uv \in \matching$, $u < v$, then it is not possible that $u \notin X$ and $v \in X$.
Obviously checking if a subset is downward closed can be performed in polynomial time, thus we can apply Proposition \ref{prop:cut-dp},
accepting only downward closed subsets of $V$.
This leads to the following lemma:
\begin{lemma}\label{lem:matching}
The number of downward closed subsets of $V$ is bounded by $2^{n-2|\matching|} 3^{|\matching|}$.
If $|\matching| \geq \eps_1 n$, then we can solve the \schedname problem in time
$$T_1(n) = O^\star((3\slash 4)^{\eps_1 n} 2^n).$$\qed
\end{lemma}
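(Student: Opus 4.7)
The plan is to bound the number of downward closed subsets of $V$ by analyzing, for each edge of the matching $\matching$ and for each vertex left uncovered by $\matching$, how many distinct ways a downward closed set can intersect it. First I would fix a matching $\matching$ and, for each edge $uv \in \matching$, orient it according to the partial order, say with $u < v$. The downward closure property forces that whenever $v \in X$ we also have $u \in X$, so the only admissible intersections of $X$ with $\{u,v\}$ are $\emptyset$, $\{u\}$, and $\{u,v\}$ — exactly three options, while a fourth one ($\{v\}$) is forbidden. For each of the $n-2|\matching|$ vertices not covered by $\matching$, no local constraint is imposed by $\matching$, so at most two options are available.

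Multiplying across the $|\matching|$ matched pairs and the unmatched vertices, the number of downward closed subsets is bounded by $3^{|\matching|} \cdot 2^{n-2|\matching|}$, which proves the first part of the lemma. Rewriting this as $2^n \cdot (3/4)^{|\matching|}$, we see that under the assumption $|\matching| \geq \eps_1 n$ it is at most $2^n \cdot (3/4)^{\eps_1 n}$.

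For the algorithmic part, I would define $\mathcal{R}$ to accept $X \subseteq V$ if and only if $X$ is downward closed, which can clearly be tested in polynomial time by checking, for every $v \in X$, whether all predecessors of $v$ belong to $X$. Since the optimal ordering $\sigma$ of \schedname{} necessarily satisfies $\sigma^{-1}(\{1,\ldots,i\})$ downward closed for every $i$, Proposition~\ref{prop:cut-dp} applies and produces an optimal ordering in time $O^\star\!\bigl((3/4)^{\eps_1 n} 2^n\bigr) = T_1(n)$.

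I do not expect a genuine obstacle here: the argument is an immediate local counting per matching edge combined with Proposition~\ref{prop:cut-dp}. The only point worth double-checking is that an edge $uv \in \matching$ of the comparability graph must correspond to a comparable pair in $(V,\leq)$ and therefore determines a unique direction $u<v$ to use in the local count; this is ensured by the very definition $E = \{uv : u<v \vee v<u\}$.
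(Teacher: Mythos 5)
Your proof is correct and follows essentially the same route as the paper: count three admissible intersections per matching edge and two per unmatched vertex to get the $2^{n-2|\matching|}3^{|\matching|}$ bound, then apply Proposition~\ref{prop:cut-dp} with $\mathcal{R}$ accepting exactly the downward closed sets, using that every prefix of a feasible ordering is downward closed.
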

Note that for any small positive constant $\eps_1$ the complexity $T_1(n)$ is of required order, i.e., 
$T_1(n) = \Oh(c^n)$ for some $c<2$ that depends on $\eps_1$.
Thus, we only have to deal with the case where $|\matching| < \eps_1 n$.

Let us fix a maximum matching $\matching$, let $\Wvc \subseteq V$ be the set of endpoints of $\matching$,
and let $I_1 = V \setminus \Wvc$.
Note that, as $M$ is a maximum matching in $G$, no two jobs in $I_1$ are bound by a precedence constraint,
and $|\Wvc| \leq 2\eps_1 n$, $|I_1| \geq (1-2\eps_1)n$.
See Figure \ref{fig:matching} for an illustration.

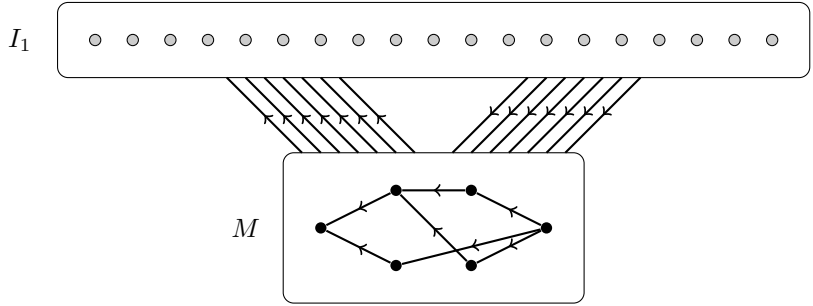
\begin{figure}[htbp]
\begin{center}
\begin{tikzpicture}
    \tikzstyle{wvertex}=[circle,fill=black,minimum size=0.15cm,inner sep=0pt]
    \tikzstyle{ivertex}=[circle,draw=black,fill=light-gray,minimum size=0.15cm,inner sep=0pt]
    \centering
    \begin{scope}[shift={(0,-1.5)}]
      \draw[rounded corners=4pt] (-2, -1) rectangle (2,1) ;
      \draw (-2.5, 0) node {$\Wvc$};
      \node[wvertex] (w1) at (-1.5,0) {};
      \node[wvertex] (w2) at (-0.5,-0.5) {};
      \node[wvertex] (w3) at (-0.5,0.5) {};
      \node[wvertex] (w4) at (0.5,-0.5) {};
      \node[wvertex] (w5) at (0.5,0.5) {};
      \node[wvertex] (w6) at (1.5, 0) {};
    \end{scope}
    \begin{scope}[shift={(0,1)}]
    \draw[rounded corners=4pt] (-5,-0.5) rectangle (5,0.5) ;
    \foreach \x in {1,2,...,19} {
      \node[ivertex] (a\x) at (-5 + \x*0.5, 0) {};
    }
    \draw (-5.5, 0) node {$I_1$};
    \end{scope}
    \begin{scope}[decoration={markings,mark=at position 0.5 with {\arrow{>}}}]
      \foreach \a/\b in {w6/w4,w6/w5,w6/w2,w2/w1,w3/w1,w4/w3,w5/w3} {
        \draw[thick,postaction={decorate}] (\a) -- (\b);
      }
      \foreach \x in {1,2,...,7} {
        \draw[thick,postaction={decorate}] (-2+\x*0.25, -0.5) -- (-3+\x*0.25,0.5);
        \draw[thick,postaction={decorate}] (3-\x*0.25, 0.5) -- (2-\x*0.25,-0.5);
      }
    \end{scope}
\end{tikzpicture}
\caption{An illustration of the case left after Lemma \ref{lem:matching}.
In this and all further figures, an arrow points from the successor job to the predecessor one.}
\label{fig:matching}
\end{center}
\end{figure}

\subsection{High-level overview --- part 2}\label{sec:overview2}

We are left in the situation where there is a small number of ``special'' elements ($\Wvc$),
and the bulk remainder ($I_1$), consisting of elements that are tied by precedence constraints
only to $\Wvc$ and not to each other.

First notice that if $\Wvc$ was empty, the problem would be trivial: with no
precedence constraints we should simply order the tasks from the shortest to the longest.
Now let us consider what would happen if all the constraints between any $u \in I_1$ and
$w \in \Wvc$ would be of the form $u < w$ --- that is, if the jobs from $I_1$ had no predecessors.
For any prefix set candidate $X$ we consider $X_I = X \cap I_1$.
Now for any $x \in X_I$, $y \in I_1 \setminus X_I$ we have an alternative prefix candidate:
the set $X' = (X \cup \{y\}) \setminus \{x\}$. If $\czas{y} < \czas{x}$, there has to be a reason
why $X'$ is not a strictly better prefix candidate than $X$ --- namely, there has to exist
$w \in \Wvc$ such that $x < w$, but $y\not< w$.

A similar reasoning would hold even if not all of $I_1$ had no predecessors, but just some
constant fraction $J$ of $I$ --- again, the only feasible prefix candidates would be those
in which for every $x \in X_I \cap J$ and $y \in J \setminus X_I$ there is a reason (either
$\czas{x} < \czas{y}$ or an element $w \in \Wvc$ which requires $x$, but not $y$) not to exchange them.
It turns out that if $|J| > \eps_2 n$, where $\eps_2 > 2 \eps_1$, this observation suffices
to prove that the number of possible intersections of feasible sets with $J$ is exponentially
smaller than $2^{|J|}$. This is formalized and proved in
Lemma \ref{lem:core}, and is the cornerstone of the whole result.

A typical application of this lemma is as follows: say we have a set $K \subset I_1$ of
cardinality $|K| > 2j$, while
we know for some reason that all the predecessors of elements of $K$ appear on positions
$j$ and earlier. If $K$ is large (a constant fraction of $n$), this is enough to limit
the number of feasible sets to $(2-\eps)^n$. To this end it suffices
to show that there are exponentially
fewer than $2^{|K|}$ possible intersections of a feasible set with $K$. Each such intersection
consists of a set of at most $j$ elements (that will be put on positions $1$ through $j$),
and then a set in which every element has a reason not to be exchanged with something from
outside the set --- and there are relatively few of those by Lemma \ref{lem:core}
--- and when we do the calculations, it turns out the resulting number of
possibilities is exponentially smaller than $2^{|K|}$.

To apply this reasoning, we need to be able to tell that all the prerequisites of a given
element appear at some position or earlier. To achieve this, we need to know the approximate
positions of the elements in $\Wvc$. We achieve this by branching into $4^{|\Wvc|}$ cases,
for each element $w \in \Wvc$ choosing to which of the four quarters of the set
$\{1,\ldots,n\}$ will $\sigma_{opt}(w)$ belong. This incurs a multiplicative cost\footnote{Actually, this bound can be improved to $10^{|\Wvc|/2}$, as $\Wvc$ are endpoints of a matching in the graph corresponding to the set of precedences.}
of $4^{|\Wvc|}$,
which will be offset by the gains from applying Lemma \ref{lem:core}.

We will now repeatedly apply Lemma \ref{lem:core} to obtain information about the
positions of various elements of $I_1$. We will repeatedly say that if ``many'' elements
(by which we always mean more than $\eps n$ for some $\eps$) do not satisfy something,
we can bound the number of feasible sets, and thus finish the algorithm.
For instance, look at those elements of $I_1$ which
can appear in the first quarter, i.e., none of their prerequisites appear in quarters
two, three and four. If there is more than $(\frac{1}{2}+\delta)n$ of them for some constant $\delta > 0$, we can apply the above
reasoning for $j = n\slash 4$ (Lemma \ref{lem:quarters1}).
Subsequent lemmata bound the number of feasible sets if there are many
elements that cannot appear in any of the two first quarters (Lemma \ref{lem:half}),
if {\em less} than $(\frac{1}{2}-\delta)n$ elements can appear in the first
quarter (Lemma \ref{lem:quarters1}) and if a constant fraction of elements
in the second quarter could actually appear in the first quarter
(Lemma \ref{lem:quarters2}). We also apply similar reasoning to elements that
can or cannot appear in the last quarter.

We end up in a situation where we have four groups of elements, each of size
roughly $n \slash 4$, split upon whether they can appear in the first quarter
and whether they can appear in the last one; moreover, those that can appear in
the first quarter will not appear in the second, and those that can appear in
the fourth will not appear in the third. This means that there are two pairs of parts
which do not interact, as the set of places in which they can appear are
disjoint. We use this independence of sorts to construct a different algorithm
than the DP we used so far, which solves our problem in this specific case
in time $O^\star(2^{3n \slash 4 + \eps})$ (Lemma \ref{lem:finish-him}).

As can be gathered from this overview, there are many technical details we
will have to navigate in the algorithm. This is made more precarious by
the need to carefully select all the epsilons. We decided to use symbolic values for
them in the main proof, describing their relationship appropriately, using four constants
$\eps_k$, $k=1,2,3,4$. The constants $\eps_k$ are very small positive reals, and additionally $\eps_k$
is much smaller than $\eps_{k+1}$ for $k=1,2,3$. At each step, we shortly discuss the existence
of such constants. We discuss the choice of optimal values of these constants in Section \ref{sec:values},
although the value we perceive in our algorithm lies rather in the existence
of an $O^\star((2 - \eps)^n)$ algorithm than in the value of $\eps$ (which is admittedly very small).

\subsection{Technical preliminaries}\label{sec:init}
We start with a few simplifications.
First, we add a few dummy jobs with no precedence constraints and zero processing times, so that $n$ is divisible by four.
Second, by slightly perturbing the jobs' processing times, we can assume
that all processing times are pairwise different and, moreover, each ordering has different total completion time.
This can be done, for instance, by replacing time $\czas{v}$ with a pair $(\czas{v}, (n+1)^{\pi(v)-1})$,
where $\pi:V \to \{1,2,\ldots,n\}$ is an arbitrary numbering of $V$. The addition of pairs is performed coordinatewise,
whereas comparison is performed lexicographically.
Note that this in particular implies that the optimal solution is unique, we denote it by $\sigma_{opt}$.
Third, at the cost of an $n^2$ multiplicative overhead,
we guess the jobs $v_{begin} = \sigma_{opt}^{-1}(1)$ and $v_{end}=\sigma_{opt}^{-1}(n)$
and we add precedence constraints $v_{begin} < v < v_{end}$ for each $v \neq v_{begin},v_{end}$.
If $v_{begin}$ or $v_{end}$ were not in $\Wvc$ to begin with, we add them there.

A number of times our algorithm branches into several subcases, in each branch assuming some property
of the optimal solution $\sigma_{opt}$. Formally speaking, in each branch we seek the optimal ordering
among those that satisfy the assumed property. We somewhat abuse the notation and denote by $\sigma_{opt}$
the optimal solution in the currently considered subcase.
Note that $\sigma_{opt}$ is always unique within any subcase, as each ordering
has different total completion time.

For $v \in V$ by $\pred(v)$ we denote the set $\{u \in V : u < v\}$
of predecessors of $v$,
and by $\succc(v)$ we denote the set $\{u \in V : v < u\}$ of successors of $v$.
We extend this notation to subsets of $V$: $\pred(U) = \bigcup_{v \in U} \pred(v)$
and $\succc(U) = \bigcup_{v \in U}\succc(v)$.
Note that for any set $U \subseteq I_1$, both $\pred(U)$ and $\succc(U)$ are subsets of $\Wvc$.

In a few places in this paper we use the following simple bound on binomial coefficients
that can be easily proven using the Stirling's formula.
\begin{lemma}\label{lem:binom}
Let $0 < \alpha < 1$ be a constant. Then
$$\binom{n}{\alpha n} = \Ohstar\left( \left(\frac{1}{\alpha^\alpha (1-\alpha)^{1-\alpha}}\right)^n \right).$$
In particular, if $\alpha \neq 1/2$ then there exists a constant $c_\alpha < 2$
that depends only on $\alpha$ and
$$\binom{n}{\alpha n} = \Ohstar\left(c_\alpha^n\right).$$
\end{lemma}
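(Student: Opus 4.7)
The plan is to apply Stirling's formula directly to each factorial in $\binom{n}{\alpha n} = \frac{n!}{(\alpha n)!\,((1-\alpha)n)!}$. Substituting $k! = \sqrt{2\pi k}(k/e)^k(1+o(1))$ for each of the three factorials, the three $e$-powers combine to $e^n / (e^{\alpha n} e^{(1-\alpha)n}) = 1$, so the exponential parts cancel. The remaining polynomial $(k/e)^k$ pieces collapse to
$$\frac{n^n}{(\alpha n)^{\alpha n} ((1-\alpha)n)^{(1-\alpha)n}} = \frac{1}{\alpha^{\alpha n} (1-\alpha)^{(1-\alpha)n}} = \left(\frac{1}{\alpha^\alpha (1-\alpha)^{1-\alpha}}\right)^n,$$
while the $\sqrt{\cdot}$ prefactors combine to a factor of order $1/\sqrt{n}$. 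Both this prefactor and the $(1+o(1))$ error are polynomial in $n$, hence absorbed by the $\Ohstar$ notation.

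A minor wrinkle is that $\alpha n$ need not be an integer; one should interpret the statement with $\lfloor \alpha n \rfloor$ in place of $\alpha n$. Since $\alpha$ is a fixed constant, the resulting perturbation in the Stirling estimate changes the base of the exponent only by a factor of the form $(1+O(1/n))^n = \Oh(1)$, and changes the prefactor by at most a constant. In any case, all such discrepancies fit comfortably inside $\Ohstar$.

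For the second part, define $f(\alpha) = \alpha^\alpha (1-\alpha)^{1-\alpha}$ for $\alpha \in (0,1)$. I would compute $\log f(\alpha) = \alpha \log \alpha + (1-\alpha) \log(1-\alpha)$, whose derivative $\log \alpha - \log(1-\alpha)$ has a unique zero at $\alpha = 1/2$, and whose second derivative $\frac{1}{\alpha} + \frac{1}{1-\alpha}$ is strictly positive on $(0,1)$. Hence $f$ is strictly convex with a unique minimum at $\alpha = 1/2$, where $f(1/2) = 1/2$. Therefore, for any constant $\alpha \neq 1/2$ we have $f(\alpha) > 1/2$, and the constant $c_\alpha = 1/f(\alpha) < 2$ depends only on $\alpha$ and satisfies the required bound together with the first part.

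There is no real obstacle here: the whole argument is a routine application of Stirling's formula combined with a one-line convexity calculation. The only thing worth flagging explicitly is that the hidden polynomial factors must genuinely be polynomial in $n$, which is exactly why we need $\alpha$ to be a constant independent of $n$.
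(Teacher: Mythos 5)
Your proof is correct and follows exactly the route the paper intends: the paper states this lemma without proof, remarking only that it ``can be easily proven using the Stirling's formula,'' which is precisely your argument (Stirling for the first bound, plus the standard convexity computation showing $\alpha^\alpha(1-\alpha)^{1-\alpha}$ is minimized at $\alpha=1/2$ for the second). The integrality caveat about $\lfloor \alpha n\rfloor$ is handled adequately and everything fits within $\Ohstar$.
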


\subsection{The core lemma}\label{sec:core}

We now formalize the idea of exchanges presented at the beginning of Section \ref{sec:overview2}.
%In the proof of the first case we exchange $u$ with some $v_w$
%whereas in the second case we exchange $v$ with some $u_w$.
\begin{definition}\label{def:xch}
Consider some set $K \subset I_1$, and its subset $L \subset K$.
If there exists $u \in L$ such that for every $w \in \succc(u)$ we can
find $v_w \in (K \cap \pred(w)) \setminus L$ with $t(v_w) < t(u)$ then
we say $L$ is {\em $\succc$-exchangeable} with respect to $K$, otherwise
we say $L$ is {\em non-$\succc$-exchangeable} with respect to $K$.

Similarly, if there exists $v \in (K \setminus L)$ such that for every $w \in \pred(v)$
we can find $u_w \in L \cap \succc(w)$ with $t(u_w) > t(v)$,
we call $L$ {\em $\pred$-exchangeable} with respect to $K$, otherwise we call
it {\em non-$\pred$-exchangeable} with respect to $K$.
\end{definition}

Whenever it is clear from the context, we omit the set $K$ with respect to which its subset is
or is not $\pred$- or $\succc$-exchangeable.

Let us now give some more intuition on the exchangeable sets.
Let $L$ be a non-$\succc$-exchangeable set with respect to $K \subseteq I_1$ and let $u \in L$.
By the definition, there exists $w \in \succc(u)$, such that for all $v_w \in (K \cap \pred(w)) \setminus L$
we have $t(v_w) \geq t(u)$; in other words, all predecessors of $w$ in $K$ that are scheduled after $L$
have larger processing time than $u$ --- which seems like a ``correct'' choice if we are to optimize the total completion time.

On the other hand, let $L = \sigma_{opt}^{-1}(\{1,2,\ldots,i\}) \cap K$ for some $1 \leq i \leq n$ and
assume that $L$ is a $\succc$-exchangeable set with respect to $K$ with a job $u \in L$ witnessing this fact.
Let $w$ be the job in $\succc(u)$ that is scheduled first in the optimal ordering $\sigma_{opt}$.
By the definition, there exists $v_w \in (K \cap \pred(w)) \setminus L$ with $t(v_w) < t(u)$.
It is tempting to decrease the total completion time of $\sigma_{opt}$ by swapping the jobs $v_w$ and $u$ in $\sigma_{opt}$: by the choice of $w$, no precedence constraint
involving $u$ will be violated by such an exchange, so we need to care only about the predecessors of $v_w$.

We formalize the aforementioned applicability of the definition of $\pred$- and $\succc$-exchangeable sets in the following lemma:
\begin{lemma}\label{lem:exchange}
Let $K \subset I_1$.
If for all $v \in K, x \in \pred(K)$ we have that $\sigma_{opt}(v) > \sigma_{opt}(x)$, %insert "that", zeby nie było overline'a
then for any $1 \leq i \leq n$ the set $K \cap \sigma_{opt}^{-1}(\{1, 2,\ldots, i\})$
is non-$\succc$-exchangeable with respect to $K$.

Similarly, if for all $v \in K, x \in \succc(K)$ we have $\sigma_{opt}(v) < \sigma_{opt}(x)$,
then the sets $K \cap \sigma_{opt}^{-1}(\{1, 2,\ldots, i\})$ are non-$\pred$-exchangeable with respect to $K$.
\end{lemma}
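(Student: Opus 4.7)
\medskip

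\noindent\textbf{Proof plan.} Both parts are symmetric (time-reversal), so I focus on the first statement and argue by contradiction. Suppose the hypothesis $\sigma_{opt}(v) > \sigma_{opt}(x)$ for all $v\in K$, $x \in \pred(K)$ holds, but for some $i$ the set $L = K \cap \sigma_{opt}^{-1}(\{1,\dots,i\})$ is $\succc$-exchangeable with respect to $K$. I will produce an ordering $\sigma'$ that still respects all precedence constraints and has \emph{strictly smaller} total completion time than $\sigma_{opt}$, contradicting optimality.

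The construction is the swap hinted at in the intuition preceding the lemma. Let $u \in L$ witness the exchangeability, and among all successors of $u$ pick $w^\ast \in \succc(u)$ minimizing $\sigma_{opt}(w^\ast)$ (note $w^\ast \in \Wvc$, since $u \in I_1$ and $I_1$ is an independent set in the comparability graph). By the definition of $\succc$-exchangeability applied to this particular $w^\ast$, there exists $v := v_{w^\ast} \in (K \cap \pred(w^\ast)) \setminus L$ with $\czas{v} < \czas{u}$. Define $\sigma'$ by swapping the positions of $u$ and $v$ in $\sigma_{opt}$; all other jobs keep their positions. Note $\sigma_{opt}(u) \leq i < \sigma_{opt}(v)$.

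The core of the argument is checking that $\sigma'$ is feasible. I will case-split on whether a precedence relation $a < b$ involves $u$ or $v$. The three nontrivial cases use the following inputs: for $b=u$ or $b=v$, the predecessor $a$ lies in $\pred(K)$, so by the lemma's hypothesis $\sigma_{opt}(a)$ is smaller than both $\sigma_{opt}(u)$ and $\sigma_{opt}(v)$, which suffices after the swap; for $a=v$, any $b \in \succc(v)$ satisfies $\sigma_{opt}(b) > \sigma_{opt}(v) > \sigma_{opt}(u) = \sigma'(v)$; and for $a=u$, any $b \in \succc(u)$ satisfies $\sigma_{opt}(b) \geq \sigma_{opt}(w^\ast) > \sigma_{opt}(v) = \sigma'(u)$ by the minimal choice of $w^\ast$ together with $v < w^\ast$. (The case $a,b\in\{u,v\}$ is vacuous because $u,v \in I_1$ are incomparable.) This case analysis is the one step where I expect care is needed; the choice of $w^\ast$ as the \emph{earliest} successor of $u$ is precisely what makes the $a=u$ case go through, and the hypothesis on $\pred(K)$ is tailored to handle the $b \in \{u,v\}$ cases.

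Finally, using $\koszt{\sigma} = \sum_{y} (n-\sigma(y)+1)\czas{y}$, only the terms for $u$ and $v$ change, and a direct computation gives
\[
\koszt{\sigma'} - \koszt{\sigma_{opt}} = (\sigma_{opt}(v) - \sigma_{opt}(u))\,(\czas{v} - \czas{u}),
\]
which is strictly negative since $\sigma_{opt}(v) > \sigma_{opt}(u)$ and $\czas{v} < \czas{u}$. This contradicts the optimality (and uniqueness) of $\sigma_{opt}$ established in Section~\ref{sec:init}. The second statement is obtained by the same argument applied to the reverse ordering, swapping the roles of $\pred$ and $\succc$, of ``$<$'' and ``$>$'' on positions, and of ``$<$'' and ``$>$'' on processing times.
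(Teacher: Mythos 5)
Your proof is correct and follows essentially the same route as the paper's: assume exchangeability, take the witness $u$, choose the earliest-scheduled successor $w^\ast$, swap $u$ with the cheaper $v_{w^\ast}\in(K\cap\pred(w^\ast))\setminus L$, verify feasibility using the hypothesis on $\pred(K)$ and the minimality of $w^\ast$, and conclude via the strict cost decrease. The only cosmetic difference is that the paper explicitly notes $\succc(u)\neq\emptyset$ because the preprocessing guarantees $v_{end}\in\succc(u)$, a point you use implicitly when picking $w^\ast$.
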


\begin{proof}
The proofs for the first and the second case are analogous. However, to help the reader get intuition on exchangeable sets, we provide them both in full detail.
See Figure \ref{rysunek} for an illustration on the $\succc$-exchangeable case.

{\underline{Non-$\succc$-exchangeable sets.}} Assume, by contradiction, that for some $i$ the set $L = K \cap \sigma_{opt}^{-1}(\{1,2,\ldots,i\})$ is $\succc$-exchangeable.
Let $u \in L$ be a job witnessing it. Let $w$ be the successor of $u$ with minimum $\sigma_{opt}(w)$ (there exists one, as $v_{end} \in \succc(u)$). 
By Definition \ref{def:xch}, we have $v_w \in (K \cap \pred(w)) \setminus L$ with $t(v_w) < t(u)$.
As $v_w \in K \setminus L$, we have $\sigma_{opt}(v_w) > \sigma_{opt}(u)$. As $v_w \in \pred(w)$, we have $\sigma_{opt}(v_w) < \sigma_{opt}(w)$.

Consider an ordering $\sigma'$ defined as $\sigma'(u) = \sigma_{opt}(v_w)$, $\sigma'(v_w) = \sigma_{opt}(u)$ and $\sigma'(x) = \sigma_{opt}(x)$ if $x \notin \{u,v_w\}$;
in other words, we swap the positions of $u$ and $v_w$ in the ordering $\sigma_{opt}$. We claim that $\sigma'$ satisfies all the precedence constraints.
As $\sigma_{opt}(u) < \sigma_{opt}(v_w)$, $\sigma'$ may only violates constraints of the form $x < v_w$ and $u < y$. However, if $x < v_w$, then $x \in \pred(K)$
and $\sigma'(v_w) = \sigma_{opt}(u) > \sigma_{opt}(x) = \sigma'(x)$ by the assumptions of the Lemma.
If $u < y$, then $\sigma'(y) = \sigma_{opt}(y) \geq \sigma_{opt}(w) > \sigma_{opt}(v_w) = \sigma'(u)$, by the choice of $w$.
Thus $\sigma'$ is a feasible solution to the considered \schedname{} instance. Since $t(v_w) < t(u)$, we have $\koszt{\sigma'} < \koszt{\sigma_{opt}}$, a contradiction.

{\underline{Non-$\pred$-exchangeable sets.}} Assume, by contradiction, that for some $i$ the set $L = K \cap \sigma_{opt}^{-1}(\{1,2,\ldots,i\})$ is $\pred$-exchangeable.
Let $v \in (K \setminus L)$ be a job witnessing it. Let $w$ be the predecessor of $v$ with maximum $\sigma_{opt}(w)$
(there exists one, as $v_{begin} \in \pred(v)$). 
By Definition \ref{def:xch}, we have $u_w \in L \cap \succc(w)$ with $t(u_w) > t(v)$.
As $u_w \in L$, we have $\sigma_{opt}(u_w) < \sigma_{opt}(v)$. As $u_w \in \succc(w)$, we have $\sigma_{opt}(u_w) > \sigma_{opt}(w)$.

Consider an ordering $\sigma'$ defined as $\sigma'(v) = \sigma_{opt}(u_w)$, $\sigma'(u_w) = \sigma_{opt}(v)$ and $\sigma'(x) = \sigma_{opt}(x)$ if $x \notin \{v,u_w\}$;
in other words, we swap the positions of $v$ and $u_w$ in the ordering $\sigma_{opt}$. We claim that $\sigma'$ satisfies all the precedence constraints.
As $\sigma_{opt}(u_w) < \sigma_{opt}(v)$, $\sigma'$ may only violates constraints of the form $x > u_w$ and $v > y$. However, if $x > u_w$, then $x \in \succc(K)$
and $\sigma'(u_w) = \sigma_{opt}(v) < \sigma_{opt}(x) = \sigma'(x)$ by the assumptions of the Lemma.
If $v > y$, then $\sigma'(y) = \sigma_{opt}(y) \leq \sigma_{opt}(w) < \sigma_{opt}(u_w) = \sigma'(v)$, by the choice of $w$.
Thus $\sigma'$ is a feasible solution to the considered \schedname{} instance. Since $t(u_w) > t(v)$, we have $\koszt{\sigma'} < \koszt{\sigma_{opt}}$, a contradiction.
\end{proof}

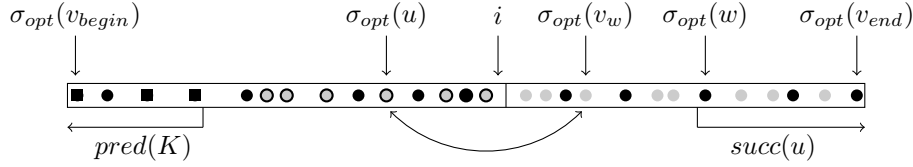
\begin{figure}[htbp]
\begin{center}
\begin{tikzpicture}[scale=0.53]
    \centering
    \draw (-10.0,0.3) rectangle (10.0,-0.3);
    \draw (1.0,0.3) -- (1.0,-0.3);
    \draw (0.8,2.0) node {$i$};
    \path[->] (0.8,1.5) edge (0.8,0.5);
    %\draw (-6.0,0.3) -- (-6.0,-0.3);
    %\draw (-6.2,2.0) node {$j$};
    %\path[->] (-6.2,1.5) edge (-6.2,0.5);
    \foreach \x in {-5,-4.5,-3.5,-2.0,-0.5,0.0,0.5,1.5,2.0,3.0,4.8,5.2,6.9,7.7,9.0}
	{
		\fill[light-gray] (\x,0) circle (0.15);
	}

    \foreach \x in {-9.75,-9,-8,-6.8,-5.5,-2.7,0.0,-1.2,2.5,4.0,6.0,8.2,9.8}
	{
		\fill[black] (\x,0) circle (0.15);
	}
    \foreach \x in {-9.75, -8, -6.8} {
		\fill[black] (\x-0.15, -0.15) rectangle (\x+0.15,0.15);
    }
    \foreach \x in {-5,-4.5,-3.5,-2.0,-0.5,0.0,0.5}
	{
		\draw[thick] (\x,0) circle (0.15);
	}

    \draw (-2.0,2.0) node {$\sigma_{opt}(u)$};
    \path[->] (-2.0,1.5) edge (-2.0,0.5);

    \draw (6.0,2.0) node {$\sigma_{opt}(w)$};
    \path[->] (6.0,1.5) edge (6.0,0.5);

    \draw (5.8,-0.3) -- (5.8,-0.8);
    \path[->] (5.8,-0.8) edge (10,-0.8);
    \draw (7.7, -1.3) node {$\succc(u)$};
    \draw (-6.6,-0.3) -- (-6.6,-0.8);
    \path[->] (-6.6,-0.8) edge (-10,-0.8);
    \draw (-8.1, -1.3) node {$\pred(K)$};

    \draw (3.0,2.0) node {$\sigma_{opt}(v_w)$};
    \path[->] (3.0,1.5) edge (3.0,0.5);

    \draw (-9.8,2.0) node {$\sigma_{opt}(v_{begin})$};
    \path[->] (-9.8,1.5) edge (-9.8,0.5);
    \draw (9.8,2.0) node {$\sigma_{opt}(v_{end})$};
    \path[->] (9.8,1.5) edge (9.8,0.5);

    \path[<->] (-1.9,-0.4) edge [out=315, in=225] (2.9,-0.4);
\end{tikzpicture}
\caption{Figure illustrating the $\succc$-exchangeable case of Lemma \ref{lem:exchange}. Gray circles indicate positions of elements of $K$, black contour indicates that an element is also in $L$. Black squares indicate positions of elements from $\pred(K)$, and black circles --- positions of other elements from $\Wvc$.}
\label{rysunek}
\end{center}
\end{figure}

Lemma \ref{lem:exchange}
means that if we manage to identify a set $K$ satisfying the assumptions
of the lemma, the only sets the DP algorithm has to consider are the
non-exchangeable ones. The following core lemma proves that there are few of
those (provided that $K$ is big enough), and we can identify them easily.

\begin{lemma} \label{lem:core}
For any set $K \subset I_1$ the number of non-$\succc$-exchangeable (non-$\pred$-exchangeable) subsets with regard to $K$ is at most $\sum_{l \leq |\Wvc|} \binom{|K|}{l}$.
Moreover, there exists an algorithm
which checks whether a set is $\succc$-exchangeable ($\pred$-exchangeable) in polynomial
time.
\end{lemma}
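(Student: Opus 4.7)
The plan is to establish the counting bound as an immediate consequence of the Sauer--Shelah lemma, after showing that the family $\mathcal{F} = \{L \subseteq K : L \text{ is non-}\succc\text{-exchangeable w.r.t.\ } K\}$ has VC-dimension at most $|\Wvc|$. The $\pred$-exchangeable case will follow by a symmetric argument, and the algorithmic claim will be a direct verification of Definition~\ref{def:xch}.

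For the VC-dimension bound I would argue by contradiction. Suppose $X = \{x_1, \ldots, x_d\} \subseteq K$ with $d = |\Wvc|+1$ is shattered by $\mathcal{F}$. For each $i$, shattering produces a non-$\succc$-exchangeable $L_i$ with $L_i \cap X = \{x_i\}$. Applying non-exchangeability to the element $x_i \in L_i$ yields a witness $w_i \in \succc(x_i)$, which lies in $\Wvc$ because $\succc(K) \subseteq \Wvc$ (recall $K \subseteq I_1$), such that every $v \in (K \cap \pred(w_i)) \setminus L_i$ satisfies $t(v) \geq t(x_i)$. The crucial step is pigeonhole: since the $d > |\Wvc|$ witnesses $w_1, \ldots, w_d$ all lie in $\Wvc$, two of them must coincide, say $w_i = w_j = w$ for some $i \neq j$. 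Then $x_i, x_j \in K \cap \pred(w)$ because both are predecessors of $w$, while $x_j \notin L_i$ and $x_i \notin L_j$. Substituting $v = x_j$ into $L_i$'s witness condition gives $t(x_j) \geq t(x_i)$, and symmetrically $t(x_i) \geq t(x_j)$ using $L_j$'s witness. This contradicts the distinctness of processing times arranged in Section~\ref{sec:init}, so no $(|\Wvc|+1)$-element subset of $K$ can be shattered. Sauer--Shelah then yields $|\mathcal{F}| \leq \sum_{l \leq |\Wvc|} \binom{|K|}{l}$.

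The $\pred$-exchangeable case runs in parallel: shattering gives $L_i$ with $L_i \cap X = X \setminus \{x_i\}$, the witness $w_i \in \pred(x_i) \cap \Wvc$ arises from $x_i \notin L_i$, and the pigeonhole collision forces $t(x_i) \leq t(x_j)$ and $t(x_j) \leq t(x_i)$. For the algorithmic part, checking whether $L$ is $\succc$-exchangeable amounts to iterating over each $u \in L$ and each $w \in \succc(u) \cap \Wvc$ and scanning $(K \cap \pred(w)) \setminus L$ for some $v$ with $t(v) < t(u)$, which is clearly polynomial; the $\pred$-version is analogous. The only subtle point in the whole argument is the pigeonhole step, which relies essentially on $\succc(K) \subseteq \Wvc$ bounding the range of the witness map, and on the distinctness of processing times to upgrade the two symmetric inequalities into an outright contradiction rather than a mere equality.
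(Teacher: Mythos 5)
Your proof is correct, but it takes a genuinely different route from the paper's. The paper establishes the bound by an explicit injective encoding: for the $\succc$ case it maps each $Y \subseteq K$ to the set $f(Y)$ of ``least expensive predecessors outside $Y$'' (one candidate per element of $\Wvc$, so $|f(Y)| \leq |\Wvc|$) and exhibits a decoding map $g$ with $g(f(Y)) = Y$ for every non-$\succc$-exchangeable $Y$, so that non-exchangeable sets inject into subsets of $K$ of size at most $|\Wvc|$; the $\pred$ case uses ``most expensive successors inside $Y$''. You instead bound the VC dimension of the family of non-exchangeable subsets of $K$ by $|\Wvc|$ and invoke the Sauer--Shelah lemma, which gives exactly the same quantity $\sum_{l \leq |\Wvc|} \binom{|K|}{l}$. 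Your shattering argument is sound: the singleton (respectively co-singleton) traces yield witnesses $w_i \in \succc(x_i) \subseteq \Wvc$ (respectively $w_i \in \pred(x_i) \subseteq \Wvc$) because $K \subseteq I_1$; a pigeonhole collision $w_i = w_j$ puts $x_i$ and $x_j$ into each other's witness conditions, forcing $t(x_i) \geq t(x_j)$ and $t(x_j) \geq t(x_i)$, which contradicts the distinctness of processing times arranged in Section~\ref{sec:init} (the paper's proof relies on the same distinctness); and the polynomial-time membership test is the same direct verification of Definition~\ref{def:xch} as in the paper. What the paper's encoding buys is a self-contained elementary argument with an explicit compression of each non-exchangeable set (and in fact a characterization of non-exchangeability as $g(f(Y)) = Y$); what your argument buys is brevity and a cleaner conceptual statement (the family has VC dimension at most $|\Wvc|$), at the cost of citing Sauer--Shelah as a black box. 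Since the algorithm only ever needs the counting bound together with the polynomial-time test, your proof fully suffices for the paper's application.
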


The idea of the proof is to construct a function $f$ that encodes each non-exchangeable set
by a subset of $K$ no larger than $\Wvc$.
To show this encoding is injective, we provide a decoding function $g$ and show that
$g\circ f$ is an identity on non-exchangeable sets.

\begin{proof}
As in Lemma \ref{lem:exchange}, the proofs for $\succc$- and $\pred$-exchangeable sets are analogous,
but for the sake or clarity we include both proofs in full detail.

{\underline{Non-$\succc$-exchangeable sets.}} 
For any set $Y \subset K$ we define the function 
$f_Y : \Wvc \to K \cup \{\pinezka\}$ as follows:
for any element $w \in \Wvc$ we define $f_Y(w)$ 
(the {\em least expensive predecessor of} $w$ {\em outside} $Y$) to be the element of
$(K \setminus Y) \cap \pred(w)$ which has the smallest processing time,
or $\pinezka$ if $(K \setminus Y) \cap \pred(w)$ is empty.
We now take $f(Y)$ (the set of the {\em least expensive predecessors outside $Y$})
to be the set $\{f_Y(w) : w \in \Wvc\} \setminus \{\pinezka\}$.
We see that $f(Y)$ is indeed a set of cardinality at most $|\Wvc|$.

Now we aim to prove that
$f$ is injective on the family of non-$\succc$-exchangeable sets.
To this end we define the reverse function $g$. For a set $Z \subset K$
(which we think of as the set of the least expensive predecessors outside some $Y$)
let $g(Z)$ be the set of such elements $v$ of $K$
that there exists $w \in \succc(v)$ such that for any $z_w \in Z \cap \pred(w)$ we have
$t(z_w) > t(v)$. 
Notice, in particular, that $g(Z) \cap Z = \emptyset$, as for $v \in Z$ and $w \in \succc(v)$ we have $v \in Z \cap \pred(w)$.

First we prove $g(f(Y)) \subset Y$ for any $Y \subset K$.
Take any $v \in K \setminus Y$ and consider any $w \in \succc(v)$.
Then $f_Y(w) \neq \pinezka$  and $t(f_Y(w)) \leq t(v)$,
as $v \in (K \setminus Y) \cap \pred(w)$.
Thus $v \notin g(f(Y))$, as for any $w \in \succc(v)$ we can take a witness $z_w = f_Y(w)$
in the definition of $g(f(Y))$.

In the other direction, let us assume that $Y$ does not satisfy $Y \subset g(f(Y))$.
This means we have $u \in Y \setminus g(f(Y))$.
Then we show that $Y$ is $\succc$-exchangeable.
Consider any $w \in \succc(u)$.
As $u \notin g(f(Y))$, by the definition of the function $g$ applied to the set $f(Y)$,
there exists $z_w \in f(Y) \cap \pred(w)$ with
$t(z_w) \leq t(u)$. But $f(Y) \cap Y = \emptyset$, while $u \in Y$;
and as all the values of $t$ are distinct, $t(z_w) < t(u)$
and $z_w$ satisfies the condition for $v_w$ in the definition of
$\succc$-exchangeability.

{\underline{Non-$\pred$-exchangeable sets.}}
For any set $Y \subset K$ we define the function 
$f_Y : \Wvc \to K \cup \{\pinezka\}$ as follows:
for any element $w \in \Wvc$ we define $f_Y(w)$ 
(the {\em most expensive successor of} $w$ {\em in} $Y$) to be the element of
$Y \cap \succc(w)$ which has the largest processing time,
or $\pinezka$ if $Y \cap \succc(w)$ is empty.
We now take $f(Y)$ (the set of the {\em most expensive successors in $Y$})
to be the set $\{f_Y(w) : w \in \Wvc\} \setminus \{\pinezka\}$.
We see that $f(Y)$ is indeed a set of cardinality at most $|\Wvc|$.

Now we aim to prove that $f$ is injective on the family of non-$\pred$-exchangeable sets.
To this end we define the reverse function $g$. For a set $Z \subset K$
(which we think of as the set of most expensive successors in some $Y$)
let $g(Z)$ be the set of such elements $v$ of $K$
that for any $w \in \pred(v)$ there exists a $z_w \in Z \cap \succc(w)$ with
$t(z_w) \geq t(v)$. 
Notice, in particular, that $g(Z) \subset Z$, as for $v \in Z$ the job $z_w = v$ is a good witness for any $w \in \pred(v)$.

First we prove $Y \subset g(f(Y))$ for any $Y \subset K$.
Take any $v \in Y$ and consider any $w \in \pred(v)$.
Then $f_Y(w) \neq \pinezka$  and $t(f_Y(w)) \geq t(v)$,
as $v \in Y \cap \succc(w)$.
Thus $v \in g(f(Y))$, as for any $w \in \pred(v)$ we can take $z_w = f_Y(w)$
in the definition of $g(f(Y))$.

In the other direction, let us assume that $Y$ does not satisfy $g(f(Y)) \subseteq Y$.
This means we have $v \in g(f(Y)) \setminus Y$.
Then we show that $Y$ is $\pred$-exchangeable.
Consider any $w \in \pred(v)$.
As $v \in g(f(Y))$, by the definition of the function $g$ applied to the set $f(Y)$,
 there exists $z_w \in f(Y) \cap \succc(w)$ with
$t(z_w) \geq t(v)$. But $f(Y) \subset Y$, while $v \not\in Y$;
and as all the values of $t$ are distinct, $t(z_w) > t(v)$
and $z_w$ satisfies the condition for $u_w$ in the definition of
$\pred$-exchangeability.

Thus, in both cases, if $Y$ is non-exchangeable then $g(f(Y)) = Y$ (in fact it is possible
to prove in both cases that $Y$ is non-exchangeable iff $g(f(Y))=Y$).
As there are $\sum_{l = 0}^{|\Wvc|} \binom{|K|}{l}$ possible values of $f(Y)$,
the first part of the lemma is proven.
For the second, it suffices to notice that $\succc$- and $\pred$-exchangeability can be checked
in time $\Oh(|K|^2 |\Wvc|)$ directly from the definition. 
\end{proof}

\begin{example}
To illustrate the applicability of Lemma \ref{lem:core}, we analyze the following very simple case:
assume the whole set $\Wvc \setminus \{v_{begin}\}$ succeeds $I_1$, i.e., for every $w\in \Wvc \setminus \{v_{begin}\}$ and $v\in I_1$ we have $w \not< v$.
If $\eps_1$ is small, then we can use the first case of Lemma \ref{lem:exchange} for the whole set $K=I_1$: we have $\pred(K) = \{v_{begin}\}$ and we
only look for orderings that put $v_{begin}$ as the first processed job.
Thus, we can apply Proposition \ref{prop:cut-dp} with algorithm $\mathcal{R}$ that rejects sets $X \subseteq V$ where $X \cap I_1$ is $\succc$-exchangeable with respect to
$I_1$. By Lemma \ref{lem:core}, the number of sets accepted by $\mathcal{R}$ is bounded by $2^{|\Wvc|} \sum_{l \leq |\Wvc|} \binom{|I_1|}{l}$,
which is small if $|\Wvc| \leq \eps_1 n$.
\end{example}

\subsection{Important jobs at $n/2$}\label{sec:half}

As was already mentioned in the overview, the assumptions of Lemma \ref{lem:exchange} are quite strict; therefore, we need to learn a bit more on how $\sigma_{opt}$ behaves on $\Wvc$ in order to distinguish a suitable place for an application. As $|\Wvc| \leq 2\eps_1 n$, we can afford branching into few subcases for every job in $\Wvc$.

Let $A = \{1,2,\ldots,n/4\}$, $B = \{n/4+1, \ldots, n/2\}$, $C = \{n/2+1, \ldots, 3n/4\}$, $D = \{3n/4+1, \ldots, n\}$, i.e., we split $\{1,2,\ldots,n\}$ into quarters.
For each $w \in \Wvc\setminus \{v_{begin},v_{end}\}$ we branch into two cases: whether $\sigma_{opt}(w)$ belongs to $A \cup B$ or $C \cup D$;
however, if some predecessor (successor) of $w$ has been already assigned to $C \cup D$ ($A \cup B$), we do not allow $w$ to be placed in $A \cup B$ ($C \cup D$).
Of course, we already know that $\sigma_{opt}(v_{begin})\in A$ and $\sigma_{opt}(v_{end})\in D$.
Recall that the vertices of $\Wvc$ can be paired into a matching; since for each $w_1 < w_2$, $w_1,w_2 \in \Wvc$ we cannot have $w_1$ placed in $C \cup D$
and $w_2$ placed in $A \cup B$, this branching leads to $3^{|\Wvc|/2} \leq 3^{\eps_1 n}$ subcases, and thus the same overhead in the time complexity.
By the above procedure, in all branches the guesses about alignment of jobs from $\Wvc$ satisfy precedence constraints inside $\Wvc$.

Now consider a fixed branch. 
Let $\Wvc^{AB}$ and $\Wvc^{CD}$ be the sets of elements of $\Wvc$ to be placed in $A \cup B$ and $C \cup D$, respectively.
%For any $\Gamma \in \{A,B,C,D\}$ let $\Wvc^\Gamma$ be the set of elements of $\Wvc$ to be placed in $\Gamma$.
%Moreover let $\Wvc^{AB} = \Wvc^A \cup \Wvc^B$ and $\Wvc^{CD} = \Wvc^C \cup \Wvc^D$.

\begin{figure}[htbp]
\begin{center}
\begin{tikzpicture}
    \centering
%    \begin{scope}[shift={(0,-1.8)}]
%    \foreach \x/\n in {1/A,2/B,3/C,4/D} {
%      \draw (-9 + \x*3, -0.2) rectangle (-6+\x*3,0.2);
%      \draw (-7.5 + \x*3,0) node {$\n$};
%      {\small{\draw (0,-0.4) node {positions};}}
%    }
%    \end{scope}
    \begin{scope}[shift={(0,-1)}]
      \draw[rounded corners=4pt] (-6, -0.3) rectangle (6,0.3) ;
      \draw (0,-0.3) -- (0,0.3);
      \draw (-3, 0) node {$\Wvc^{AB}$};
      \draw (3, 0) node {$\Wvc^{CD}$};
      \coordinate (MAB1) at (-5,0.3);
      \coordinate (MAB2) at (-4,0.3);
      \coordinate (MAB3) at (-3,0.3);
      \coordinate (MAB4) at (-1,0.3);
      \coordinate (MCD1) at (5,0.3);
      \coordinate (MCD2) at (4,0.3);
      \coordinate (MCD3) at (3,0.3);
      \coordinate (MCD4) at (1,0.3);
      \draw (-6.5,0) node {$\Wvc$};
    \end{scope}
    \begin{scope}[shift={(0,2)}]
    \draw[rounded corners=4pt] (-6,-1) rectangle (6,0);
    \draw[rounded corners=4pt] (-6,-1) rectangle (-3,0);
    \draw[rounded corners=4pt] (3,-1) rectangle (6,0);
    \draw[rounded corners=4pt] (-3,-1) rectangle (3,0);
    \coordinate (WAB) at (-4.5,-1);
    \coordinate (WCD) at (4.5,-1);
    \coordinate (I2) at (0,-1);
    \draw (-4.5,-0.5) node {$\Whalf^{AB}$};
    \draw (4.5,-0.5) node {$\Whalf^{CD}$};
    \draw (0,-0.5) node {$I_2$};
    \draw (-6.5,-0.5) node {$I_1$};
    \end{scope}
    \begin{scope}[decoration={markings,mark=at position 0.5 with {\arrow{>}}}]
      \foreach \a/\b in {MAB1/WAB, MAB2/I2, MAB3/WCD, WAB/MAB4, WAB/MCD3, MCD4/WCD, I2/MCD2, WCD/MCD1} {
        \draw[thick,postaction={decorate}] (\b) -- (\a);
      }
    \end{scope}
\end{tikzpicture}
\caption{An illustration of the sets $\Wvc^{AB}$, $\Wvc^{CD}$, $\Whalf^{AB}$ and $\Whalf^{CD}$.}
\label{fig:half}
\end{center}
\end{figure}

Let us now see what we can learn in a fixed branch about the behaviour of $\sigma_{opt}$ on $I_1$.
Let 
\begin{align*}
\Whalf^{AB} &= \left\{v \in I_1: \exists_w\ \left(w \in \Wvc^{AB} \wedge v < w\right)\right\} \\
\Whalf^{CD} &= \left\{v \in I_1: \exists_w\ \left(w \in \Wvc^{CD} \wedge w < v\right)\right\},
\end{align*}
that is $\Whalf^{AB}$ (resp. $\Whalf^{CD}$) are those elements of $I_1$ which are forced into the first (resp. second) half of $\sigma_{opt}$
by the choices we made about $\Wvc$ (see Figure \ref{fig:half} for an illustration).
If one of the $\Whalf$ sets is much larger than $\Wvc$, we have obtained a gain --- by branching into at most $3^{\eps_1 n}$ branches we gained
additional information about a significant (much larger than $(\log_23) \eps_1 n$) number of other elements (and so we will be able to avoid considering a significant number of
sets in the DP algorithm).
This is formalized in the following lemma:

\begin{lemma}\label{lem:half}
Consider a fixed branch.
If $\Whalf^{AB}$ or $\Whalf^{CD}$ has at least $\eps_2 n$ elements, then the DP algorithm can be augmented to solve the instance in the considered branch in time
$$T_2(n) = \left( 2^{(1-\eps_2)n} + \binom{n}{(1/2 - \eps_2)n} + 2^{\eps_2 n}\binom{(1-\eps_2)n}{n/2} \right) n^{\Oh(1)}.$$
\end{lemma}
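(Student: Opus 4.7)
The plan is to invoke Proposition~\ref{prop:cut-dp} with an algorithm $\mathcal{R}$ tailored to the half--interval information extracted by the branching. Without loss of generality I assume $|\Whalf^{AB}| \geq \eps_2 n$; the case $|\Whalf^{CD}| \geq \eps_2 n$ is handled by the symmetric argument in which the DP is run backwards from the end of the ordering (swapping the roles of prefixes and suffixes). Write $k := |\Whalf^{AB}|$, and note $k \leq n/2$ since every element of $\Whalf^{AB}$ must occupy a distinct position in $\{1,\ldots,n/2\}$. Define $\mathcal{R}$ to accept $X \subseteq V$ if and only if $|\Whalf^{AB} \setminus X| \leq \max(0, n/2 - |X|)$. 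This is checkable in polynomial time, and every prefix $\sigma_{opt}^{-1}(\{1,\ldots,i\})$ in the branch is accepted, because the elements of $\Whalf^{AB}$ not yet in the prefix must still fit in the remaining positions of the first half.

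Next I upper-bound the number of $\mathcal{R}$-accepted sets, splitting on $|X|$. First, if $|X| > n/2$ then $\Whalf^{AB} \subseteq X$, so $X$ is determined by $X \setminus \Whalf^{AB} \subseteq V \setminus \Whalf^{AB}$, yielding at most $2^{n-k} \leq 2^{(1-\eps_2)n}$ sets. Second, if $|X| \leq (1/2-\eps_2)n$, the trivial bound $\binom{n}{|X|}$ summed over $|X|$ contributes $\Oh\!\left(n\binom{n}{(1/2-\eps_2)n}\right)$. Third, in the remaining range $(1/2-\eps_2)n < |X| \leq n/2$, writing $b = |X \setminus \Whalf^{AB}|$, the acceptance condition forces $b \leq n/2 - k$; letting $X \cap \Whalf^{AB}$ vary freely, the count is at most $2^k \sum_{b=0}^{n/2-k} \binom{n-k}{b}$. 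Since $n/2 - k \leq (n-k)/2$, the binomial coefficients in the sum are increasing, so the sum is at most $n\binom{n-k}{n/2-k} = n\binom{n-k}{n/2}$ by symmetry.

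The remaining technical step is to show that $g(k) := 2^k \binom{n-k}{n/2}$ is non-increasing in $k \geq 0$, so that the third contribution at $k \geq \eps_2 n$ is dominated by $g(\eps_2 n) = 2^{\eps_2 n}\binom{(1-\eps_2)n}{n/2}$. This follows from
$$\frac{g(k+1)}{g(k)} \;=\; 2\cdot \frac{\binom{n-k-1}{n/2}}{\binom{n-k}{n/2}} \;=\; \frac{2(n/2-k)}{n-k} \;=\; \frac{n-2k}{n-k} \;\leq\; 1.$$
Summing the three contributions with the polynomial overhead from Proposition~\ref{prop:cut-dp} yields exactly the claimed bound $T_2(n)$.

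I expect the main obstacle to sit in the intermediate regime $(1/2-\eps_2)n < |X| \leq n/2$, where a naive $2^{|X|}$ bound is already $2^{n/2}$ and hence too weak; it is the monotonicity of $g(k)$ that exchanges the ``overspend'' $k \geq \eps_2 n$ on the $\Whalf^{AB}$ side for a correspondingly shrinking binomial on the complementary side, producing the factor $2^{\eps_2 n}\binom{(1-\eps_2)n}{n/2}$ in the bound.
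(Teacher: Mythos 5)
Your proposal is correct and follows essentially the same route as the paper: the same acceptance rule $|\Whalf^{AB}\setminus X|\leq\max(0,n/2-|X|)$ plugged into Proposition~\ref{prop:cut-dp}, the same three-regime count, and the same monotonicity of $x\mapsto 2^x\binom{n-x}{n/2}$ (which the paper only asserts and you verify). The only cosmetic differences are that the paper additionally prunes by conformance with $\Wvc^{AB},\Wvc^{CD}$ and parametrizes the middle regime by a double sum over $\alpha,\beta$ rather than by $b=|X\setminus\Whalf^{AB}|$; these do not change the bound.
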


\begin{proof}
We describe here only the case $|\Whalf^{AB}| \geq \eps_2 n$. The second case is symmetrical.

Recall that the set $\Whalf^{AB}$ needs to be placed in $A \cup B$ by the optimal ordering $\sigma_{opt}$.
We use Proposition \ref{prop:cut-dp} with an algorithm $\mathcal{R}$ that accepts sets $X \subseteq V$
such that the set $\Whalf^{AB} \setminus X$ (the elements of $\Whalf^{AB}$ not scheduled in $X$) is of size at most $\max(0, n/2 - |X|)$ (the number
of jobs to be scheduled after $X$ in the first half of the jobs).
%\begin{enumerate}
%\item all sets $X$ of size at most $n/2 - \alpha|\Whalf^{AB}|$, there are at most $\binom{n}{(1/2 - \alpha\eps_2)n} n$ such sets;
%\item among sets $X$ of size $n/2 - \alpha|\Whalf^{AB}| \leq |X| \leq n/2$, only those sets for which $|\Whalf^{AB} \setminus X| \leq \alpha|\Whalf^{AB}|$,
%  there are at most $2^{(1-\eps_2)n} \binom{\eps_2 n}{\alpha\eps_2 \cdot n} n$ such sets;
%\item among sets $X$ of size at least $n/2$, only the sets containing $\Whalf^{AB}$, there are at most $2^{(1-\eps_2)n}$ such sets.
%\end{enumerate}
Moreover, the algorithm $\mathcal{R}$ tests if the set $X$ conforms with the guessed sets $\Wvc^{AB}$ and $\Wvc^{CD}$, i.e.:
\begin{align*}
|X| \leq n/2 & \Rightarrow \Wvc^{CD} \cap X = \emptyset \\
|X| \geq n/2 & \Rightarrow \Wvc^{AB} \subseteq X.
\end{align*}
Clearly, for any $1 \leq i \leq n$, the set $\sigma_{opt}^{-1}(\{1,2,\ldots,i\})$ is accepted by $\mathcal{R}$,
as $\sigma_{opt}$ places $\Wvc^{AB} \cup \Whalf^{AB}$ in $A \cup B$ and $\Wvc^{CD}$ in $C \cup D$.

Let us now estimate the number of sets $X$ accepted by $\mathcal{R}$. Any set $X$ of size larger than $n/2$ needs to contain $\Whalf^{AB}$; there are
at most $2^{n-|\Whalf^{AB}|} \leq 2^{(1-\eps_2)n}$ such sets. All sets of size at most $n/2 - |\Whalf^{AB}|$ are accepted by $\mathcal{R}$; there
are at most $n\binom{n}{(1/2 - \eps_2)n}$ such sets. 
Consider now a set $X$ of size $n/2 - \alpha$ for some $0 \leq \alpha \leq |\Whalf^{AB}|$. Such a set needs to contain
$|\Whalf^{AB}|-\beta$ elements of $\Whalf^{AB}$ for some $0 \leq \beta \leq \alpha$ and $n/2 - |\Whalf^{AB}| - (\alpha-\beta)$ elements of $V \setminus \Whalf^{AB}$.
Therefore the number of such sets (for all possible $\alpha$) is bounded by:
\begin{align*}
&\sum_{\alpha=0}^{|\Whalf^{AB}|} \sum_{\beta=0}^\alpha \binom{|\Whalf^{AB}|}{|\Whalf^{AB}|-\beta} \binom{n-|\Whalf^{AB}|}{n/2-|\Whalf^{AB}|-(\alpha-\beta)} \\
&\qquad \leq n^2 \max_{0 \leq \beta \leq \alpha \leq |\Whalf^{AB}|} \binom{|\Whalf^{AB}|}{\beta} \binom{n-|\Whalf^{AB}|}{n/2 + (\alpha-\beta)} \\
&\qquad \leq n^2 2^{|\Whalf^{AB}|} \binom{n-|\Whalf^{AB}|}{n/2} \\
&\qquad \leq n^2 2^{\eps_2 n} \binom{(1-\eps_2)n}{n/2}
\end{align*}
The last inequality follows from the fact that the function $x \mapsto 2^x \binom{n-x}{n/2}$ is decreasing for $x \in [0,n/2]$.
The bound $T_2(n)$ follows.
%
%\begin{align*}
%|X| \leq n/4 & \Rightarrow (\Wvc^B \cup \Wvc^C \cup \Wvc^D) \cap X = \emptyset \\
%n/4 \leq |X| \leq n/2 & \Rightarrow (\Wvc^A \subseteq X \wedge (\Wvc^C \cup \Wvc^D) \cap X = \emptyset) \\
%n/2 \leq |X| \leq 3n/4 & \Rightarrow ((\Wvc^A \cup \Wvc^B) \subseteq X \wedge \Wvc^D \cap X = \emptyset) \\
%3n/4 \leq |X| & \Rightarrow (\Wvc^A \cup \Wvc^B \cup \Wvc^C) \subseteq X.
%\end{align*}

%We now verify that $\sigma_{opt}^{-1}(\{1,2,\ldots,i\})$ is accepted by $\mathcal{R}$ for any $1 \leq i \leq n$. For $i \leq n/2 - \alpha|\Whalf^{AB}|$ it is obvious,
%   and for $i \geq n/2$ it follows from the fact that $\Whalf^{AB}$ is placed in $A \cup B$
%   by $\sigma_{opt}$.
%Finally, for $n/2 - \alpha |\Whalf^{AB}| \leq i \leq n/2$ we have
%$$|\Whalf^{AB} \setminus \sigma_{opt}^{-1}(\{1,2,\ldots,i\})| \leq n/2 - i \leq \alpha|\Whalf^{AB}|.$$
\end{proof}

Note that we have $3^{\eps_1 n}$ overhead so far, due to guessing placement of the jobs from $\Wvc$.
By Lemma \ref{lem:binom}, $\binom{(1-\eps_2) n}{n/2} = \Oh((2-c(\eps_2))^{(1-\eps_2) n})$ and $\binom{n}{(1/2 - \eps_2)n} = \Oh((2-c'(\eps_2))^n)$,
for some positive constants $c(\eps_2)$ and $c'(\eps_2)$ that depend only on $\eps_2$.
Thus, for any small fixed $\eps_2$ we can choose $\eps_1$ sufficiently small
so that $3^{\eps_1 n} T_2(n) = \Oh(c^n)$ for some $c < 2$. Note that $3^{\eps_1 n} T_2(n)$ is an upper bound on the total time spent on processing all the considered subcases.

Let $\Whalf = \Whalf^{AB} \cup \Whalf^{CD}$ and $I_2 = I_1 \setminus \Whalf$. From this point we assume that $|\Whalf^{AB}|, |\Whalf^{CD}| \leq \eps_2 n$, hence $|\Whalf| \leq 2\eps_2 n$
and $|I_2| \geq (1-2\eps_1-2\eps_2)n$.
For each $v \in \Wvc^{AB} \cup \Whalf^{AB}$ we branch into two subcases, whether $\sigma_{opt}(v)$ belongs to $A$ or $B$. Similarly, for each $v \in \Wvc^{CD} \cup \Whalf^{CD}$
we guess whether $\sigma_{opt}(v)$
belongs to $C$ or $D$. Moreover, we terminate branches which are trivially contradicting the constraints.

Let us now estimate the number of subcases created by this branch.
Recall that the vertices of $\Wvc$ can be paired into a matching; since for each $w_1 < w_2$, $w_1,w_2 \in \Wvc$ we cannot have $w_1$ placed in a later segment than $w_2$;
this gives us $10$ options for each pair $w_1 < w_2$.
Thus, in total they are at most $10^{|\Wvc|/2} \leq 10^{\eps_1 n}$ ways of placing vertices of $\Wvc$ into quarters without contradicting the constraints.
Moreover, this step gives us an additional $2^{|\Whalf|} \leq 2^{2\eps_2 n}$ overhead in the time complexity for vertices in $\Whalf$.
Overall, at this point we are considering at most $10^{\eps_1 n} 2^{2\eps_2 n} n^{\Oh(1)}$ subcases.

We denote the set of elements of $\Wvc$ and $\Whalf$ assigned to quarter $\Gamma \in \{A,B,C,D\}$ by $\Wvc^\Gamma$ and $\Whalf^\Gamma$, respectively.

\subsection{Quarters and applications of the core lemma}\label{sec:appl-core}

In this section we try to apply Lemma \ref{lem:core} as follows:
We look which elements of $I_2$ can be placed in $A$ (the set $P^A$) and which cannot (the set $P^{\neg A}$).
Similarly we define the set $P^D$ (can be placed in $D$) and $P^{\neg D}$ (cannot be placed in $D$).
For each of these sets, we try to apply Lemma \ref{lem:core} to some subset of it.
If we fail, then in the next subsection we infer that the solutions in the quarters are partially independent of
each other, and we can solve the problem in time roughly $\Oh(2^{3n/4})$.
Let us now proceed with a more detailed argumentation.

We define the following two partitions of $I_2$:
\begin{align*}
P^{\neg A} &= \left\{v \in I_2: \exists_w \left(w \in \Wvc^B \wedge w < v\right)\right\},\\
P^A &= I_2 \setminus P^{\neg A} = \left\{v \in I_2: \forall_w \left(w < v \Rightarrow w \in \Wvc^A\right)\right\}, \\
P^{\neg D} &= \left\{v \in I_2: \exists_w \left(w \in \Wvc^C \wedge w > v\right)\right\},\\
P^D &= I_2 \setminus P^{\neg D} = \left\{v \in I_2: \forall_w \left(w > v \Rightarrow w \in \Wvc^D\right)\right\}.
\end{align*}
In other words, the elements of $P^{\neg A}$ cannot be placed in $A$ because some of their requirements are in $\Wvc^B$,
   and the elements of $P^{\neg D}$ cannot be placed in $D$ because they are required by some elements of $\Wvc^C$ (see Figure \ref{fig:PA} for an illustration).
 Note that these definitions are independent of $\sigma_{opt}$, so sets $P^\Delta$ for $\Delta\in \{A,\neg A,\neg D, D\}$ can be computed in polynomial time. Let 
\begin{align*}
p^A &= |\sigma_{opt}(P^A) \cap A|,\\
p^B &= |\sigma_{opt}(P^{\neg A}) \cap B|,\\
p^C &= |\sigma_{opt}(P^{\neg D}) \cap C|,\\
p^D &= |\sigma_{opt}(P^D) \cap D|.
\end{align*}

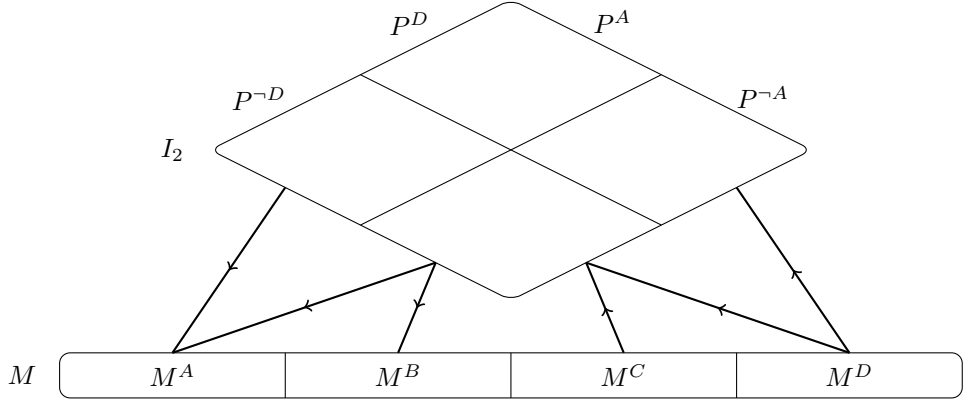
\begin{figure}[htbp]
\begin{center}
\begin{tikzpicture}
    \centering
%    \begin{scope}[shift={(0,-1.8)}]
%    \foreach \x/\n in {1/A,2/B,3/C,4/D} {
%      \draw (-9 + \x*3, -0.2) rectangle (-6+\x*3,0.2);
%      \draw (-7.5 + \x*3,0) node {$\n$};
%      {\small{\draw (0,-0.4) node {positions};}}
%    }
%    \end{scope}
    \begin{scope}[shift={(0,-1)}]
      \draw[rounded corners=4pt] (-6, -0.3) rectangle (6,0.3) ;
      \foreach \x/\n in {1/A,2/B,3/C,4/D} {
        \draw (-7.5 + \x*3, 0) node {$\Wvc^\n$};
        \coordinate (M\n) at (-7.5 + \x*3, 0.3);
      }
      \foreach \x in {1,2,3} {
        \draw (-6+\x*3,-0.3) -- (-6+\x*3,0.3);
      }
      \draw (-6.5,0) node {$\Wvc$};
    \end{scope}
    \begin{scope}[shift={(0,2)},yscale=0.5]
    \draw[rounded corners=4pt] (-2,-2) -- (-4,0) -- (0, 4) -- (4, 0) -- (0, -4) -- (-2,-2);
    \draw (-2,-2) -- (2,2);
    \draw (2,-2) -- (-2,2);
    \coordinate (PA) at (-3,-1);
    \coordinate (PnA) at (-1,-3);
    \coordinate (PD) at (3,-1);
    \coordinate (PnD) at (1,-3);
    \draw (1.35,3.4) node {$P^A$};
    \draw (3.35,1.4) node {$P^{\neg A}$};
    \draw (-1.35,3.35) node {$P^D$};
    \draw (-3.35,1.35) node {$P^{\neg D}$};
    \draw (-4.5,0) node {$I_2$};
    \end{scope}
    \begin{scope}[decoration={markings,mark=at position 0.5 with {\arrow{>}}}]
      \foreach \a/\b in {MA/PA, MA/PnA, MB/PnA, PD/MD, PnD/MD, PnD/MC} {
        \draw[thick,postaction={decorate}] (\b) -- (\a);
      }
    \end{scope}
\end{tikzpicture}
\caption{An illustration of the sets $P^\Delta$ for $\Delta \in \{A,\neg A, \neg D, D\}$ and their relation
  with the sets $\Wvc^\Gamma$ for $\Gamma \in \{A,B,C,D\}$.}
\label{fig:PA}
\end{center}
\end{figure}

Note that $p^\Gamma \leq n/4$ for every $\Gamma\in \{A,B,C,D\}$.
As $p^A = n/4 - |\Wvc^A \cup \Whalf^A|$, $p^D = n/4 - |\Wvc^D \cup \Whalf^D|$, these values can be computed by the algorithm.
We branch into $(1+n/4)^2$ further subcases, guessing the (still unknown) values $p^B$ and $p^C$.

Let us focus on the quarter $A$ and assume that $p^A$ is significantly smaller than $|P^A|/2$ (i.e., $|P^A|/2 - p^a$ is a constant fraction of $n$).
We claim that we can apply Lemma \ref{lem:core} as follows.
While computing $\sigma[X]$, if $|X| \geq n/4$, we can represent $X \cap P^A$ as a disjoint sum of two subsets $X^A_A, X^A_{BCD} \subseteq P^A$. The first one
is of size $p^A$, and represents the elements of $X \cap P^A$ placed in quarter $A$, and the second represents the elements
of $X \cap P^A$ placed in quarters $B \cup C \cup D$. Note that the elements of $X^A_{BCD}$ have all predecessors in the quarter $A$, so by Lemma \ref{lem:exchange} the set $X^A_{BCD}$ has to be non-$\succc$-exchangeable with respect to $P^A \setminus X_A^A$; therefore, by Lemma \ref{lem:core}, we can consider only a very narrow choice of
$X^A_{BCD}$. Thus, the whole part $X\cap P^A$ can be represented by its subset of cardinality at most $p^A$ plus some small information about the rest. If $p^A$ is significantly smaller than $|P^A|/2$, this representation is more concise than simply remembering a subset of $P^A$. Thus we obtain a better bound on the number of feasible sets.

A symmetric situation arises when $p^D$ is significantly smaller than $|P^D|/2$; moreover, we can similarly use Lemma \ref{lem:core} if $p^B$ is significantly smaller than $|P^{\neg A}|/2$ or $p^C$ than $|P^{\neg D}|/2$. This is formalized by the following lemma.
\begin{lemma}\label{lem:quarters0}
If $p^\Gamma < |P^\Delta|/2$ for some $(\Gamma, \Delta) \in \{(A,A),(B,\neg A), (C,\neg D),$ $(D,D)\}$ and $\eps_1 \leq 1/4$,
then the DP algorithm can be augmented to solve the remaining instance in time bounded by
$$T_p (n) = 2^{n-|P^\Delta|} \binom{|P^\Delta|}{p^\Gamma} \binom{n}{|\Wvc|} n^{\Oh(1)}.$$
\end{lemma}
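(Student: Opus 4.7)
I treat the case $(\Gamma,\Delta)=(A,A)$ in full; the cases $(B,\neg A)$, $(C,\neg D)$, $(D,D)$ are handled by analogous arguments, the $D$-cases invoking the $\pred$-exchangeable variant of Lemma~\ref{lem:core} and with the ``small prefix'' threshold shifted to $n/2$ or $3n/4$ as appropriate.

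The starting point is a structural observation. Set $X^A_A:=P^A\cap\sigma_{opt}^{-1}(A)$ and $K:=P^A\setminus X^A_A$; then $|X^A_A|=p^A$, and by definition of $P^A$ every predecessor of every element of $K$ lies in $\Wvc^A$. Since in the current branch $\Wvc^A\subseteq\sigma_{opt}^{-1}(A)$ while $K\subseteq V\setminus\sigma_{opt}^{-1}(A)$, the hypothesis of Lemma~\ref{lem:exchange} holds and every set $K\cap\sigma_{opt}^{-1}(\{1,\dots,i\})$ is non-$\succc$-exchangeable w.r.t.\ $K$. Lemma~\ref{lem:core} then bounds the number of such subsets by $\sum_{l\leq|\Wvc|}\binom{|K|}{l}\leq n\binom{n}{|\Wvc|}$, using $\eps_1\leq 1/4$ (so that $|\Wvc|\leq n/2$) to majorise the sum by its last term.

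I would branch over all $\binom{|P^A|}{p^A}$ candidate sets $X^A_A$ and, within each branch, invoke Proposition~\ref{prop:cut-dp} with an $\mathcal R$ that accepts $X$ iff $X$ is downward-closed, consistent with the quarter guesses on $\Wvc\cup\Whalf$, and additionally satisfies: for $|X|\geq n/4$, $X^A_A\subseteq X$ and $X\cap K$ is non-$\succc$-exchangeable w.r.t.\ $K$; for $|X|<n/4$, $X\cap P^{\neg A}=\emptyset$ and $|X\cap P^A|\leq p^A$. Polynomial-time testability of the large-$X$ case is guaranteed by the second statement of Lemma~\ref{lem:core}, and in the branch where the guessed $X^A_A$ agrees with $\sigma_{opt}$, every prefix of $\sigma_{opt}$ is accepted by the observation above. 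Per branch, the large-$X$ count is at most $2^{n-|P^A|}\cdot n\binom{n}{|\Wvc|}$ (fix $X^A_A\subseteq X$, choose $X\cap K$ from the non-$\succc$-exchangeable family, and take $X\setminus P^A$ arbitrarily); summing over the $\binom{|P^A|}{p^A}$ branches produces the dominant term of $T_p(n)$.

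The delicate step, and the main obstacle I would anticipate, is the small regime $|X|<n/4$: if $X^A_A$ were part of the DP state for small prefixes, the count would be multiplied by the unaffordable factor $\binom{|P^A|}{p^A}$. The resolution is to share small states across branches, which is legitimate because the $\mathcal R$-conditions on small $X$ do not reference $X^A_A$ and every $\sigma_{opt}$-prefix of size $<n/4$ trivially obeys $|X\cap P^A|\leq p^A$. The number of small accepted sets is at most $\sum_{s\leq p^A}\binom{|P^A|}{s}\cdot 2^{|\Wvc^A\cup\Whalf^A|}$; using $p^A<|P^A|/2$ together with the identity $|\Wvc^A\cup\Whalf^A|=n/4-p^A$ and the bound $|P^A|\leq n-|\Wvc|-|\Whalf|$, one verifies $|P^A|-p^A\leq 3n/4$, hence $2^{n/4-p^A}\leq 2^{n-|P^A|}$, so the small contribution is absorbed into $T_p(n)$. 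The remaining calculations are routine.
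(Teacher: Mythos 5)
Your proof is correct and follows essentially the same route as the paper: the same decomposition of an accepted large prefix into the part of $P^A$ destined for quarter $A$, a non-$\succc$-exchangeable remainder counted via Lemma~\ref{lem:core} (justified through Lemma~\ref{lem:exchange} exactly as here), and an arbitrary part outside $P^A$, with the same counting. The only difference is bookkeeping: the paper carries the quarter-$A$ portion $L$ of $X\cap P^A$ as an extra coordinate of the DP state ($\sigma[X,L]$) instead of branching over $X^A_A$ up front, but this is equivalent to your scheme, and your shared small-prefix table correctly avoids the $\binom{|P^A|}{p^A}$ blow-up that the paper sidesteps by forcing $L=X\cap P^A$ for $|X|\leq n/4$.
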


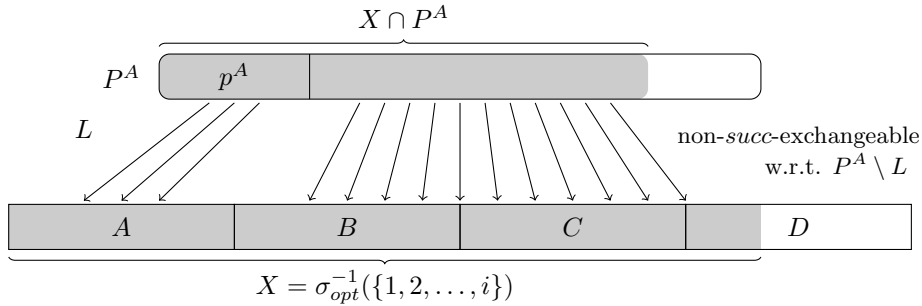
\begin{figure}[htbp]
\begin{center}
\begin{tikzpicture}
    \centering
    \begin{scope}[shift={(0,-2)}]
    \fill[light-gray] (-6,-0.3) rectangle (4,0.3);
    \foreach \x/\n in {1/A,2/B,3/C,4/D} {
      \draw (-9 + \x*3, -0.3) rectangle (-6+\x*3,0.3);
      \draw (-7.5 + \x*3,0) node {$\n$};
    }
    \foreach \x in {1,2,...,5} {
      \coordinate (A\x) at (-6+\x*0.5,0.35);
    }
    \foreach \x in {1,2,...,13} {
      \coordinate (r\x) at (-3 + \x*0.5,0.35);
    }
    \draw[snake=brace] (4,-0.4) -- (-6,-0.4);
    \draw (-1, -0.8) node {$X = \sigma_{opt}^{-1}(\{1,2,\ldots,i\})$};
    \end{scope}

    \begin{scope}[shift={(0,0)}]
      \fill[light-gray,rounded corners=4pt] (-4,-0.3) rectangle (2.5,0.3);
      \draw[snake=brace] (-4,0.4) -- (2.5,0.4);
      \draw (-0.75, 0.8) node {$X \cap P^A$};
      \draw[rounded corners=4pt] (-4, -0.3) rectangle (4,0.3);
      \draw (-4.5, 0) node {$P^A$};
      \draw (-2, -0.3) -- (-2, 0.3);
      \draw (-3, 0) node {$p^A$};
      \foreach \x in {1,2,...,5} {
        \coordinate (pA\x) at (-4 + \x*0.3333,-0.35);
      }
      \foreach \x in {1,2,...,13} {
        \coordinate (PA\x) at (-2 + \x*0.3333,-0.35);
      }
    \end{scope}
    \foreach \x in {2,3,4} {
      \path[->] (pA\x) edge (A\x);
    }
    \foreach \x in {2,3,...,12} {
      \path[->] (PA\x) edge (r\x);
    }
    \draw (-5, -0.7) node {$L$};
    {\small{
       \draw (4.5, -0.8) node {non-$\succc$-exchangeable};
       \draw (5.0,-1.2) node {w.r.t. $P^A \setminus L$};
    }}
\end{tikzpicture}
\caption{An illustration of the proof of Lemma \ref{lem:quarters0} for $(\Gamma,\Delta)=(A,A)$.}
\label{fig:proofPA}
\end{center}
\end{figure}

\begin{proof}
We first describe in detail the case $\Delta=\Gamma=A$, and, later, we shortly describe the other cases that are proven analogously.
An illustration of the proof is depicted on Figure \ref{fig:proofPA}.

On a high-level, we want to proceed as in Proposition \ref{prop:cut-dp}, i.e., use the standard DP algorithm described in Section \ref{sec:high-level1},
while terminating the computation for some unfeasible subsets of $V$.
However, in this case we need to slightly modify the recursive formula used in the computations, and we compute $\sigma[X,L]$ for $X \subseteq V$, $L \subseteq X \cap P^A$.
Intuitively, the set $X$ plays the same role as before, whereas $L$ is the subset of $X \cap P^A$ that was placed in the quarter $A$. Formally, $\sigma[X,L]$ is the ordering of $X$ that attains the minimum total cost among those orderings $\sigma$ for which $L=P^A\cap \sigma^{-1}(A)$.
Thus, in the DP algorithm we use the following recursive formula:
\begin{equation*}
\koszt{\sigma[X,L]} =
\begin{cases}
\min_{v\in \max(X)} \left[\koszt{\sigma[X \setminus \{v\}, L \setminus \{v\}]} + \koszt{v,|X|}\right] \\
    \qquad\qquad\qquad\qquad\qquad\qquad \textrm{if } |X| \leq n/4 \textrm{ and } L=X \cap P^A, \\
+\infty\,\qquad\qquad\qquad\qquad\qquad \textrm{if } |X| \leq n/4 \textrm { and } L \neq X \cap P^A, \\
\min_{v\in \max(X) \setminus L} \left[\koszt{\sigma[X \setminus \{v\}, L]} + \koszt{v,|X|}\right] \\
    \qquad\qquad\qquad\qquad\qquad\qquad \textrm{otherwise.}
\end{cases}
\end{equation*}
In the next paragraphs we describe a polynomial-time algorithm $\mathcal{R}$ that accepts or rejects pairs of subsets $(X,L)$, $X \subseteq V$, $L \subseteq X \cap P^A$;
we terminate the computation on rejected pairs $(X,L)$.
As each single calculation of $\sigma[X,L]$ uses at most $|X|$ recursive calls, the time complexity of the algorithm is bounded by the number of accepted pairs, up to a polynomial multiplicative factor.
We now describe the algorithm $\mathcal{R}$.

First, given a pair $(X,L)$, we ensure that we fulfill the guessed sets $\Wvc^\Gamma$ and $\Whalf^\Gamma$, $\Gamma \in \{A,B,C,D\}$, that is:
%\begin{align*}
%|X| \leq n/4 & \Rightarrow \left(\bigcup_{\Gamma \in \{B,C,D\}} (\Wvc^\Gamma \cup \Whalf^\Gamma)\right) \cap X = \emptyset \\
%n/4 \leq |X| \leq n/2 & \Rightarrow \left(\Wvc^A \cup \Whalf^A \subseteq X \wedge \left(\bigcup_{\Gamma \in \{C,D\}} (\Wvc^\Gamma \cup \Whalf^\Gamma)\right) \cap X = \emptyset\right) \\
%n/2 \leq |X| \leq 3n/4 & \Rightarrow \left(\left(\bigcup_{\Gamma \in \{A,B\}} (\Wvc^\Gamma \cup \Whalf^\Gamma)\right) \subseteq X \wedge (\Wvc^D \cup \Whalf^D) \cap X = \emptyset\right) \\
%3n/4 \leq |X| & \Rightarrow \left(\bigcup_{\Gamma \in \{A,B,C\}} (\Wvc^\Gamma \cup \Whalf^\Gamma)\right) \subseteq X.
%\end{align*}
E.g., we require $\Wvc^B, \Whalf^B \subseteq X$ if $|X| \geq n/2$
and $(\Wvc^B \cup \Whalf^B) \cap X = \emptyset$ if $|X| \leq n/4$. We require similar conditions for other quarters $A$, $C$ and $D$. % (cf. proof of Lemma \ref{lem:half}).
Moreover, we require that $X$ is downward closed. Note that this implies $X \cap P^{\neg A} = \emptyset$ if $|X| \leq n/4$ and $P^{\neg D} \subseteq X$ if $|X| \geq 3n/4$.

Second, we require the following:
\begin{enumerate}
\item If $|X| \leq n/4$, we require that $L = X \cap P^A$ and $|L| \leq p^A$; as $p^A\leq |P^A|/2$, there are at most $2^{n-|P^A|} \binom{|P^A|}{p^A} n$ such pairs $(X,L)$;
\item Otherwise, we require that $|L| = p^A$ and that the set $X \cap (P^A \setminus L)$ is non-$\succc$-exchangeable with respect to $P^A \setminus L$;
by Lemma~\ref{lem:core} there are at most 
$\sum_{l \leq |\Wvc|} \binom{|P^A \setminus L|}{l} \leq n \binom{n}{|\Wvc|}$ (since $|\Wvc| \leq 2\eps_1 n \leq n/2$) non-$\succc$-exchangeable sets with respect to $P^A \setminus L$,
thus there are at most $2^{n-|P^A|} \binom{|P^A|}{p^A} \binom{n}{|\Wvc|} n$ such pairs $(X,L)$.
\end{enumerate}

Let us now check the correctness of the above pruning. Let $0 \leq i \leq n$ and let $X = \sigma_{opt}^{-1}(\{1,2,\ldots,i\})$ and $L = \sigma_{opt}^{-1}(A) \cap X \cap P^A$.
It is easy to see that Lemma \ref{lem:exchange} implies that in case $i \geq n/4$ the set $X \cap (P^A \setminus L)$ is non-$\succc$-exchangeable and
the pair $(X,L)$ is accepted.

Let us now shortly discuss the case $\Gamma = B$ and $\Delta = \neg A$.
Recall that, due to the precedence constraints between $P^{\neg A}$ and $\Wvc^B$,
the jobs from $P^{\neg A}$ cannot be scheduled in the segment $A$. Therefore, while computing $\sigma[X]$ for $|X| \geq n/2$,
we can represent $X \cap P^{\neg A}$ as a disjoint sum of two subsets $X^{\neg A}_B, X^{\neg A}_{CD}$:
the first one, of size $p^B$, to be placed in $B$, and the second one to be placed in $C \cup D$.
Recall that in Section \ref{sec:half} we have ensured that for any $v \in I_2$, all predecessors of $v$ appear
in $\Wvc^{AB}$ and all successors of $v$ appear in $\Wvc^{CD}$.
We infer that all predecessors of jobs in $X^{\neg A}_{CD}$ appear in segments $A$ and $B$
and, by Lemma \ref{lem:exchange}, in the optimal solution the set $X^{\neg A}_{CD}$ is non-$\succc$-exchangeable with respect to $P^{\neg A} \setminus X^{\neg A}_B$,
Therefore we may proceed as in the case of $(\Gamma,\Delta) = (A,A)$; in particular, while computing $\sigma[X,L]$:
\begin{enumerate}
\item If $|X| \leq n/4$, we require that $L = X \cap P^{\neg A} = \emptyset$;
\item If $n/4 < |X| \leq n/2$, we require that $L = X \cap P^{\neg A}$ and $|L| \leq p^B$;
\item Otherwise, we require that $|L| = p^B$ and that the set $X \cap (P^{\neg A} \setminus L)$ is non-$\succc$-exchangeable with respect to $P^{\neg A} \setminus L$.
\end{enumerate}

The cases $(\Gamma,\Delta) \in \{C,\neg D), (D,D)\}$ are symmetrical:
$L$ corresponds to jobs from $P^\Delta$ scheduled to be done in segment $\Gamma$ and
we require that $X \cap (P^\Delta \setminus L)$ is non-$\pred$-exchangeable (instead of non-$\succc$-exchangeable) with respect to $P^\Delta \setminus L$.
The recursive definition of $\koszt{\sigma[X,L]}$ should be also adjusted.
\end{proof}

Observe that if any of the sets $P^\Delta$ for $\Delta \in \{A,\neg A,\neg D, D\}$
is significantly larger than $n/2$ (i.e., larger than $(\frac{1}{2}+\delta)n$ for some $\delta > 0$),
one of the situations in Lemma \ref{lem:quarters0} indeed occurs, since $p^\Gamma \leq n/4$ for $\Gamma \in \{A,B,C,D\}$ and $|\Wvc|$ is small.

\begin{lemma}\label{lem:quarters1}
If $2\eps_1 < 1/4+\eps_3/2$ and at least one of the sets $P^A$, $P^{\neg A}$, $P^{\neg D}$ and $P^{D}$ is of size at least $(1/2+\eps_3)n$, then
the DP algorithm can be augmented to solve the remaining instance in time bounded by
$$T_3(n) = 2^{(1/2-\eps_3)n} \binom{(1/2+\eps_3)n}{n/4} \binom{n}{2\eps_1 n} n^{\Oh(1)}.$$
\end{lemma}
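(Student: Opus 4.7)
The plan is to apply Lemma~\ref{lem:quarters0} to whichever of $P^A, P^{\neg A}, P^{\neg D}, P^D$ is large, and then massage the resulting bound $T_p(n)$ into the stated $T_3(n)$ by two monotonicity arguments. Without loss of generality suppose $|P^A| \geq (1/2+\eps_3)n$ (the other three cases are symmetric: pair $P^{\neg A}$ with $\Gamma=B$, pair $P^{\neg D}$ with $\Gamma=C$, pair $P^D$ with $\Gamma=D$, using the corresponding options in Lemma~\ref{lem:quarters0}).

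First I would verify the hypothesis of Lemma~\ref{lem:quarters0} for $(\Gamma,\Delta)=(A,A)$. Since quarter $A$ contains only $n/4$ positions, trivially $p^A \leq n/4$. The assumption $2\eps_1 < 1/4 + \eps_3/2$ implies $\eps_1 < 1/8 + \eps_3/4 \leq 1/4$, as needed, and moreover $|P^A|/2 \geq (1/4 + \eps_3/2)n > n/4 \geq p^A$. Lemma~\ref{lem:quarters0} therefore applies and gives a running time of
\[
T_p(n) \;=\; 2^{n-|P^A|}\binom{|P^A|}{p^A}\binom{n}{|\Wvc|}\,n^{\Oh(1)}.
\]

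Next I would bound the two binomial factors. Since $|\Wvc| \leq 2\eps_1 n \leq n/2$ and $\binom{n}{k}$ is monotone on $[0, n/2]$, we get $\binom{n}{|\Wvc|} \leq \binom{n}{2\eps_1 n}$. Likewise $p^A \leq n/4 \leq |P^A|/2$, so $\binom{|P^A|}{p^A} \leq \binom{|P^A|}{n/4}$.

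The only mildly technical step, which I view as the main obstacle, is to then show the product $f(M) := 2^{n-M}\binom{M}{n/4}$ is non-increasing for $M \geq n/2$, so that it attains its maximum over the range $M \in [(1/2+\eps_3)n,\, n]$ at the left endpoint $M = (1/2+\eps_3)n$. This follows from the ratio
\[
\frac{f(M+1)}{f(M)} \;=\; \frac{1}{2}\cdot\frac{M+1}{M+1-n/4},
\]
which is at most $1$ iff $M \geq n/2 - 1$; this holds on our range. Substituting $M = |P^A| \geq (1/2+\eps_3)n$ gives $2^{n-|P^A|}\binom{|P^A|}{n/4} \leq 2^{(1/2-\eps_3)n}\binom{(1/2+\eps_3)n}{n/4}$. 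Combining with the bound on $\binom{n}{|\Wvc|}$ yields $T_p(n) \leq T_3(n)$, completing the proof. The other three cases ($P^{\neg A}$, $P^{\neg D}$, $P^D$) follow from exactly the same two-step argument.
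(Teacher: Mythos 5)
Your proof is correct and follows essentially the same route as the paper: apply Lemma~\ref{lem:quarters0} with the appropriate pair $(\Gamma,\Delta)$, note $p^\Gamma \leq n/4 < |P^\Delta|/2$, and use monotonicity of the binomial coefficients together with the fact that $2^{n-M}\binom{M}{n/4}$ is non-increasing for $M \geq n/2$ to replace $|P^\Delta|$, $p^\Gamma$ and $|\Wvc|$ by $(1/2+\eps_3)n$, $n/4$ and $2\eps_1 n$ respectively. The paper states this as ``straightforward,'' citing exactly the monotonicity fact you verify via the ratio computation, so your write-up just makes the same argument explicit (with the harmless implicit use of $\eps_3$ being small when deducing $\eps_1 \leq 1/4$ and $2\eps_1 n \leq n/2$).
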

\begin{proof}
The claim is straightforward; note only that the term $2^{n-|P^\Delta|} \binom{|P^\Delta|}{p^\Gamma}$ for $p^\Gamma < |P^\Delta|/2$ is a decreasing function
of $|P^\Delta|$. 
\end{proof}

Note that we have $10^{\eps_1 n} 2^{2\eps_2 n} n^{\Oh(1)}$ overhead so far. As $\binom{(1/2+\eps_3)n}{n/4} = \Oh((2-c(\eps_3))^{(1/2 + \eps_3)n})$ 
for some constant $c(\eps_3) > 0$, for any small fixed $\eps_3$ we can choose sufficiently small $\eps_2$ and $\eps_1$
to have $10^{\eps_1 n} 2^{2\eps_2 n} n^{\Oh(1)} T_3(n) = \Oh(c^n)$ for some $c < 2$.

From this point we assume that $|P^A|, |P^{\neg A}|, |P^{\neg D}|, |P^D| \leq (1/2+\eps_3)n$. As $P^A \cup P^{\neg A} = I_2 = P^{\neg D} \cup P^D$ and $|I_2| \geq (1-2\eps_1-2\eps_2)n$, this implies
that these four sets are of size at least $(1/2-2\eps_1-2\eps_2-\eps_3)n$, i.e., they are of size roughly $n/2$.
Having bounded the sizes of the sets $P^\Delta$ from below, we are able to use Lemma \ref{lem:quarters0} again:
if any of the numbers $p^A$, $p^B$, $p^C$, $p^D$ is significantly smaller than $n/4$ (i.e., smaller than $(\frac{1}{4}-\delta)n$ for some $\delta > 0$),
then it is also significantly smaller than half of the cardinality of the corresponding set $P^\Delta$.

\begin{lemma}\label{lem:quarters2}
Let $\eps_{123} = 2\eps_1 + 2\eps_2 + \eps_3$.
If at least one of the numbers $p^A$, $p^B$, $p^C$ and $p^D$ is smaller than $(1/4-\eps_4)n$ and $\eps_4 > \eps_{123}/2$,
then the DP algorithm can be augmented to solve the remaining instance in time bounded by
$$T_4(n) = 2^{(1/2+\eps_{123})n} \binom{(1/2-\eps_{123})n}{(1/4-\eps_4)n} \binom{n}{2\eps_1 n} n^{\Oh(1)}.$$
\end{lemma}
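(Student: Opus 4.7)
The plan is to apply Lemma~\ref{lem:quarters0} to the ``offending'' pair; the bulk of the work is then to show that the running time guaranteed by that lemma collapses to exactly $T_4(n)$ under the hypotheses in force at this stage of the argument.

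First, I would fix a pair $(\Gamma,\Delta) \in \{(A,A),(B,\neg A),(C,\neg D),(D,D)\}$ for which $p^\Gamma < (1/4-\eps_4)n$. From the discussion preceding the lemma we already have $|P^\Delta| \geq (1/2 - \eps_{123})n$ for every $\Delta \in \{A,\neg A,\neg D,D\}$. Combining this with the assumption $\eps_4 > \eps_{123}/2$ gives
$$p^\Gamma < (1/4-\eps_4)n < (1/4-\eps_{123}/2)n \leq |P^\Delta|/2,$$
so the hypothesis $p^\Gamma < |P^\Delta|/2$ of Lemma~\ref{lem:quarters0} is verified (the auxiliary $\eps_1 \leq 1/4$ is trivial for any reasonable choice of $\eps_1$). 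Invoking that lemma yields
$$T_p(n) = 2^{n - |P^\Delta|} \binom{|P^\Delta|}{p^\Gamma} \binom{n}{|\Wvc|} n^{\Oh(1)}.$$

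Next, I would bound each factor separately. Since $|\Wvc| \leq 2\eps_1 n \leq n/2$, monotonicity of the binomial coefficient gives $\binom{n}{|\Wvc|} \leq n \binom{n}{2\eps_1 n}$. Similarly, because $\binom{m}{k}$ is non-decreasing in $k$ for $k \leq m/2$ and because we have just shown $p^\Gamma \leq (1/4-\eps_4)n \leq |P^\Delta|/2$, I can replace $p^\Gamma$ by $(1/4-\eps_4)n$ in the central factor, so it remains only to bound $h(m) := 2^{n-m}\binom{m}{(1/4-\eps_4)n}$ for $m \in [(1/2-\eps_{123})n,\ (1/2+\eps_3)n]$.

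The last and most delicate step is the maximization of $h$. I would examine the consecutive ratio
$$\frac{h(m+1)}{h(m)} = \frac{m+1}{2\bigl(m+1-(1/4-\eps_4)n\bigr)},$$
which is at most $1$ iff $m \geq (1/2-2\eps_4)n - 1$. The assumption $\eps_4 > \eps_{123}/2$ translates to $(1/2-2\eps_4)n < (1/2-\eps_{123})n$, so $h$ is non-increasing throughout the admissible range and is therefore maximized at $m = (1/2-\eps_{123})n$. Substituting this value, together with the earlier estimate on $\binom{n}{|\Wvc|}$, produces precisely the claimed bound $T_4(n)$. The main obstacle is exactly this monotonicity check: the condition $\eps_4 > \eps_{123}/2$ is what guarantees the peak of $h$ lies to the left of our interval, so that the extremal value of the product sits at the endpoint $(1/2-\eps_{123})n$ and not in its interior (in which case the cleaner formula for $T_4$ would not be available).
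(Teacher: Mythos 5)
Your proposal is correct and follows essentially the same route as the paper: the paper's proof is just a terse statement that Lemma~\ref{lem:quarters0} applies (via $|P^\Delta|\geq(1/2-\eps_{123})n$ and $\eps_4>\eps_{123}/2$) and that $2^{n-|P^\Delta|}\binom{|P^\Delta|}{p^\Gamma}$ is decreasing in $|P^\Delta|$, so the bound is evaluated at $|P^\Delta|=(1/2-\eps_{123})n$. Your ratio computation and endpoint maximization merely spell out that monotonicity claim in detail, with no substantive difference in approach.
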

\begin{proof}
As, before, the claim is a straightforward application of Lemma \ref{lem:quarters0}, and the fact that the term $2^{n-|P^\Delta|} \binom{|P^\Delta|}{p^\Gamma}$ for $p^\Gamma < |P^\Delta|/2$ is a decreasing function of $|P^\Delta|$.
\end{proof}

So far we have $10^{\eps_1 n} 2^{2\eps_2 n} n^{\Oh(1)}$ overhead. Similarly as before, for any small fixed $\eps_4$ if we choose $\eps_1, \eps_2, \eps_3$ sufficiently small, we have $\binom{(1/2-\eps_{123})n}{(1/4-\eps_4)n} = \Oh((2-c(\eps_4))^{(1/2-\eps_{123})n})$
and $10^{\eps_1 n} 2^{2\eps_2 n} n^{\Oh(1)}T_4(n) = \Oh(c^n)$ for some $c<2$. 

Thus we are left with the case when $p^A,p^B,p^C,p^D \geq (1/4-\eps_4)n$.

\subsection{The remaining case}\label{sec:finish}

In this subsection we infer that in the remaining case the quarters $A$, $B$, $C$ and $D$ are somewhat independent, which allows us to develop a faster algorithm. More precisely, note that $p^\Gamma \geq (1/4 - \eps_4)n$, $\Gamma \in \{A,B,C,D\}$, means that
almost all elements that are placed in $A$ by $\sigma_{opt}$ belong to $P^A$, while almost all elements placed in $B$ belong to $P^{\neg A}$. Similarly, almost all elements placed in $D$ belong to $P^D$ and almost all elements placed in $C$ belong to $P^{\neg D}$.
As $P^A \cap P^{\neg A} = \emptyset$ and $P^{\neg D} \cap P^D = \emptyset$, this implies that what happens in the quarters $A$ and $B$, as well as $C$ and $D$, is (almost) independent. This key observation can be used to develop an algorithm that solves this special case in time roughly $\Oh(2^{3n/4})$.

Let $\Wquarter^B = I_2 \cap (\sigma_{opt}^{-1}(B) \setminus P^{\neg A})$ and $\Wquarter^C = I_2 \cap (\sigma_{opt}^{-1}(C) \setminus P^{\neg D})$.
As $p^B, p^C \geq (1/4-\eps_4)n$ we have that $|\Wquarter^B|,|\Wquarter^C| \leq \eps_4 n$.
We branch into at most $n^{2} \binom{n}{\eps_4 n}^2$ subcases, guessing the sets $\Wquarter^B$ and $\Wquarter^C$.
Let $\Wquarter = \Wquarter^B \cup \Wquarter^C$, $I_3 = I_2 \setminus \Wquarter$, $Q^\Delta = P^\Delta \setminus \Wquarter$ for $\Delta \in \{A,\neg A,\neg D,D\}$.
Moreover, let $W^\Gamma = \Wvc^\Gamma \cup \Whalf^\Gamma \cup \Wquarter^\Gamma$ for $\Gamma \in \{A,B,C,D\}$, using the convention $\Wquarter^A = \Wquarter^D = \emptyset$.

Note that in the current branch for any ordering and any $\Gamma \in \{A,B,C,D\}$, the segment $\Gamma$ gets all the jobs from $W^\Gamma$ and $q^\Gamma = n/4- |W^\Gamma|$
jobs from appropriate $Q^\Delta$
($\Delta=A,\neg A,\neg D,D$ for $\Gamma=A,B,C,D$, respectively).
%Note that in the current branch any ordering puts into the segment $\Gamma$ for $\Gamma \in \{A,B,C,D\}$ all the jobs from $W^\Gamma$ and $q^\Gamma = n/4-|W^\Gamma|$ jobs from appropriate $Q^\Delta$
Thus, the behaviour of an ordering $\sigma$ in $A$ influences the behaviour of $\sigma$ in $C$ by the choice of which elements of $Q^A \cap Q^{\neg D}$ are placed
in $A$, and which in $C$. Similar dependencies are between $A$ and $D$, $B$ and $C$, as well as $B$ and $D$ (see Figure \ref{fig:sched-deps}).
In particular, there are no dependencies between $A$ and $B$, as well as $C$ and $D$,
and we can compute the optimal arrangement by keeping track of only three out of four dependencies at once, leading us to an algorithm
running in time roughly $\Oh(2^{3n/4})$.
This is formalized in the following lemma:
\begin{lemma}\label{lem:finish-him}
If $2\eps_1+2\eps_2+\eps_4 < 1/4$ and the assumptions of Lemmata
\ref{lem:matching} and \ref{lem:half}--\ref{lem:quarters2} are not satisfied,
the instance can be solved by an algorithm running in time bounded by
$$T_5(n) = \binom{n}{\eps_4 n}^2 2^{(3/4+\eps_3)n} n^{\Oh(1)}.$$
\end{lemma}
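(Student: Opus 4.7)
The plan is to reduce the remaining case to a partition-optimisation problem with a 4-cycle dependency structure, and then break the cycle by enumerating the partition of the smallest of the four pools. Once the $\binom{n}{\eps_4 n}^2$ overhead for guessing $\Wquarter^B$ and $\Wquarter^C$ is accounted for, every job in $I_3$ lies in exactly one of the four pools $X_{AC}=Q^A\cap Q^{\neg D}$, $X_{AD}=Q^A\cap Q^D$, $X_{BC}=Q^{\neg A}\cap Q^{\neg D}$, $X_{BD}=Q^{\neg A}\cap Q^D$, and has to be assigned to one of the two quarters indicated by its subscript. Writing $P_{\Gamma_1\Gamma_2}\subseteq X_{\Gamma_1\Gamma_2}$ for the jobs sent to the first of these two quarters, the total completion time splits as
$$\mathrm{cost}_A(P_{AC}\cup P_{AD}) + \mathrm{cost}_B(P_{BC}\cup P_{BD}) + \mathrm{cost}_C(S_C) + \mathrm{cost}_D(S_D),$$
where $S_C,S_D$ are determined by complementation and $\mathrm{cost}_\Gamma(S)$ denotes the optimal cost of scheduling $W^\Gamma \cup S$ in the $n/4$ positions of quarter $\Gamma$ respecting precedences; this is precomputable in $\Ohstar(2^{n/4})$ per valid $S$ by the standard bitmask DP, which is easily within budget.

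The four cost terms form a 4-cycle over the four partition variables: $\mathrm{cost}_A$ couples $(P_{AC},P_{AD})$, $\mathrm{cost}_D$ couples $(P_{AD},P_{BD})$, $\mathrm{cost}_B$ couples $(P_{BD},P_{BC})$, and $\mathrm{cost}_C$ couples $(P_{BC},P_{AC})$. I would break the cycle by exhaustively enumerating the partition of one pool $X^*$; once $X^*$ is fixed the remaining dependency graph becomes a path on three variables, solved by a standard path DP that enumerates the middle variable of the path and then independently minimises over the two endpoints, whose subset sizes are forced by $|S_\Gamma|=q^\Gamma$. Since the four pools partition $I_3$ and $|I_3|\leq n$, I would take $X^*$ to be the pool of smallest size, which has at most $n/4$ elements.

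The delicate step is the running-time bound. Take $X^*=X_{AD}$, so $X_{BC}$ is the middle of the remaining path; the outer enumeration with $\kappa'=|P_{AD}\cap A|$ together with $|S_A|=q^A$ and $|S_D|=q^D$ forces the three inner binomials to be $\binom{|X_{AC}|}{q^A-\kappa'}$ (for the $P_{AC}$ endpoint), $\binom{|X_{BD}|}{q^D-|X_{AD}|+\kappa'}$ (for the $P_{BD}$ endpoint), and $\binom{|X_{BC}|}{\cdot}$ for the middle. Bounding the middle factor trivially by $2^{|X_{BC}|}$ and applying Vandermonde's identity to the outer $\binom{|X_{AD}|}{\kappa'}$ together with each of the two endpoint binomials in turn yields bounds $\binom{|Q^A|}{q^A}$ and $\binom{|Q^D|}{q^D}$ respectively, each at most $\binom{M}{n/4}\leq 2^M$ with $M=(1/2+\eps_3)n$. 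Thus enumerating $X_{AD}$ costs $\Ohstar(2^{|X_{BC}|}\cdot 2^M)$, and analogously each of the four enumeration choices pays $2^{|\cdot|}$ for its cycle-opposite pool. Taking the minimum over the four choices, the exponent is bounded by $\min(|X_{AC}|,|X_{AD}|,|X_{BC}|,|X_{BD}|) + M \leq n/4 + (1/2+\eps_3)n = (3/4+\eps_3)n$. Multiplied by the $\binom{n}{\eps_4 n}^2$ branching factor this gives the claimed $T_5(n)$.

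The main obstacle is precisely this Vandermonde compression: trivially bounding each of the three inner binomials by $2^{|X|}$ would only give exponent $3M/2=(3/4+\tfrac{3}{2}\eps_3)n$. The improvement comes from observing that, for each outer enumeration choice, two of the three inner binomials have arguments linear in $\kappa'$ and complementary to the outer binomial's argument (by the size constraints $|S_\Gamma|=q^\Gamma$), which is exactly the shape Vandermonde's identity requires in order to collapse an outer binomial into a factor of the form $\binom{|Q^\Delta|}{q^\Gamma}\leq \binom{M}{n/4}$.
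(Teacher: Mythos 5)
Your proposal is correct and follows essentially the same route as the paper: precompute optimal per-quarter schedules for all admissible subsets, then break the 4-cycle of pool dependencies by enumerating two cycle-opposite pools (your outer $X^*$ together with the middle of the remaining path), so that one pool of size at most $n/4$ carries the only unabsorbed exponential factor while everything else is bounded by $2^{(1/2+\eps_3)n}$ using $|Q^\Delta|\leq|P^\Delta|\leq(1/2+\eps_3)n$. Two cosmetic remarks: the paper obtains the same bound by simply regrouping $2^{|X^*|}$ with each endpoint pool that shares a quarter with it (so the Vandermonde compression, while valid, is not actually needed), and the outer pool should be chosen as the one \emph{opposite} to the smallest pool rather than the smallest itself --- an inaccuracy that your final minimization over the four enumeration choices already repairs.
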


\begin{figure}[htbp]
\begin{center}
\begin{tikzpicture}[scale=0.7]
   \begin{scope}[shift={(-8,0)}]
    \draw (0,0) rectangle (6,6);
    \draw (3,0) -- (3,6);
    \draw (0,3) -- (6,3);
    \draw (1.5,1.5) node {$A$ or $C$};
    \draw (4.5,1.5) node {$B$ or $C$};
    \draw (1.5,4.5) node {$A$ or $D$};
    \draw (4.5,4.5) node {$B$ or $D$};
    \draw (1.5,6.5) node {$Q^A$};
    \draw (4.5,6.5) node {$Q^{\neg A}$};
    \draw (-0.6,4.5) node {$Q^D$};
    \draw (-0.7,1.5) node {$Q^{\neg D}$};
  \end{scope}
  \begin{scope}[shift={(2,1)}]
    \draw (-0.5,-0.5) rectangle (0.5,0.5);
    \draw (3.5,-0.5) rectangle (4.5,0.5);
    \draw (-0.5,3.5) rectangle (0.5,4.5);
    \draw (3.5,3.5) rectangle (4.5,4.5);
    \draw (0,0) node {$D$};
    \draw (4,0) node {$B$};
    \draw (4,4) node {$C$};
    \draw (0,4) node {$A$};
    \draw[<->] (0.7,0) -- (3.3,0);
    \draw[<->] (0.7,4) -- (3.3,4);
    \draw[<->] (0,0.7) -- (0,3.3);
    \draw[<->] (4,0.7) -- (4,3.3);
{\small{
    \draw (2,-0.6) node {$Q^{\neg A} \cap Q^{D}$};
    \draw (2,4.6) node {$Q^{A} \cap Q^{\neg D}$};
    \draw (-1.3, 2) node {$Q^{A} \cap Q^{D}$};
    \draw (5.5, 2) node {$Q^{\neg A} \cap Q^{\neg D}$};
       }}
  \end{scope}
\end{tikzpicture}
\caption{Dependencies between quarters and sets $Q^\Delta$. The left part of the figure illustrates where the jobs from $Q^{\Delta_1} \cap Q^{\Delta_2}$ may be placed.
The right part of the figure illustrates the dependencies between the quarters.}
\label{fig:sched-deps}
\end{center}
\end{figure}
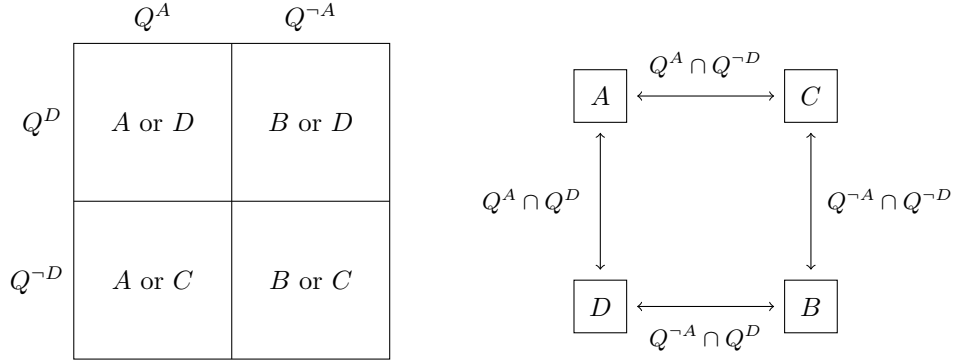

\begin{proof}
Let $(\Gamma,\Delta) \in \{(A,A), (B,\neg A), (C,\neg D), (D,D)\}$.
For each set $Y \subseteq Q^\Delta$ of size $q^\Gamma$, for each bijection (partial ordering) $\sigma^{\Gamma}(Y):Y \cup W^\Gamma \to \Gamma$ let us define its cost as
$$\koszt{\sigma^{\Gamma}(Y)} = \sum_{v \in Y \cup W^\Gamma} \koszt{v,\sigma^{\Gamma}(Y)(v)}.$$
Let $\sigma_{opt}^{\Gamma}(Y)$ be the partial ordering that minimizes the cost (recall that it is unique due to the initial steps in Section \ref{sec:init}).
Note that if we define $Y_{opt}^\Gamma = \sigma_{opt}^{-1}(\Gamma) \cap Q^\Delta$ for $(\Gamma,\Delta) \in \{(A,A), (B,\neg A), (C,\neg D), (D,D)\}$, then
the ordering $\sigma_{opt}$ consists of the partial orderings $\sigma_{opt}^{\Gamma}(Y_{opt}^\Gamma)$.

We first compute the values $\sigma_{opt}^{\Gamma}(Y)$ for all $(\Gamma,\Delta) \in \{(A,A),(B,\neg A),$  $(C,\neg D),(D,D)\}$ and $Y \subseteq Q^\Delta$, $|Y| = q^\Gamma$, by a straightforward
modification of the DP algorithm. For fixed pair $(\Gamma,\Delta)$, the DP algorithm computes $\sigma_{opt}^{\Gamma}(Y)$ for all $Y$ in time
$$2^{|W^\Gamma|+|Q^\Delta|} n^{\Oh(1)} \leq  2^{(2\eps_1+2\eps_2+\eps_4)n + (1/2+\eps_3)n} n^{\Oh(1)} = \Oh(2^{(3/4+\eps_3)n}).$$
The last inequality follows from the assumption $2\eps_1+2\eps_2+\eps_4 < 1/4$.

Let us focus on the sets $Q^A \cap Q^{\neg D}$, $Q^A \cap Q^D$, $Q^{\neg A} \cap Q^{\neg D}$ and $Q^{\neg A} \cap Q^D$. Without loss of generality we assume that
$Q^A \cap Q^{\neg D}$ is the smallest among those. As they all are pairwise disjoint and sum up to $I_2$, we have $|Q^A \cap Q^{\neg D}| \leq n/4$.
We branch into at most $2^{|Q^A \cap Q^{\neg D}|+|Q^{\neg A} \cap Q^D|}$ subcases, guessing the sets
\begin{align*}
Y_{opt}^{AC} &= Y_{opt}^A \cap (Q^A \cap Q^{\neg D}) = (Q^A \cap Q^{\neg D}) \setminus Y_{opt}^C\quad\textrm{and}\\
Y_{opt}^{BD} &= Y_{opt}^B \cap (Q^{\neg A} \cap Q^D) = (Q^{\neg A} \cap Q^D) \setminus Y_{opt}^D.
\end{align*}
Then, we choose the set
$$Y_{opt}^{AD} = Y_{opt}^A \cap (Q^A \cap Q^D) = (Q^A \cap Q^D) \setminus Y_{opt}^D$$
that optimizes
$$\koszt{\sigma_{opt}^{A}(Y_{opt}^{AC} \cup Y_{opt}^{AD})}+\koszt{\sigma_{opt}^{D}(Q^D \setminus (Y_{opt}^{AD} \cup Y_{opt}^{BD})}.$$
Independently, we choose the set
$$Y_{opt}^{BC} = Y_{opt}^B \cap (Q^{\neg A} \cap Q^{\neg D}) = (Q^{\neg A} \cap Q^{\neg D}) \setminus Y_{opt}^C$$
that optimizes
$$\koszt{\sigma_{opt}^{B}(Y_{opt}^{BC} \cup Y_{opt}^{BD})}+\koszt{\sigma_{opt}^{C}(Q^{\neg D} \setminus (Y_{opt}^{BC} \cup Y_{opt}^{AC})}.$$
To see the correctness of the above step, note that $Y_{opt}^A = Y_{opt}^{AC} \cup Y_{opt}^{AD}$, and similarly for other quarters.

The time complexity of the above step is bounded by
\begin{align*}
&2^{|Q^A \cap Q^{\neg D}| + |Q^{\neg A} \cap Q^D|} \left(2^{|Q^A \cap Q^D|} + 2^{|Q^{\neg A} \cap Q^{\neg D}|} \right) n^{\Oh(1)} \\
&\qquad = 2^{|Q^A\cap Q^{\neg D}|} \left(2^{|Q^D|} + 2^{|Q^{\neg A}|}\right) n^{\Oh(1)} \\
&\qquad \leq 2^{(3/4 + \eps_3)n} n^{\Oh(1)}
\end{align*}
and the bound $T_5(n)$ follows.
\end{proof}
So far we have $10^{\eps_1 n} 2^{2\eps_2 n} n^{\Oh(1)}$ overhead. For sufficiently small $\eps_4$ we have $\binom{n}{\eps_4 n} = \Oh(2^{n/16})$ and then for sufficiently small constants $\eps_k$, $k=1,2,3$ we have $10^{\eps_1 n} 2^{2\eps_2 n}n^{\Oh(1)} T_5(n) = \Oh(c^n)$ for some $c < 2$.

\subsection{Numerical values of the constants}\label{sec:values}

\begin{table}[htb]
\begin{center}
\begin{tabular}{|l|l|}
\hline
\textbf{Reference} & \textbf{Running time} \\[2mm]
Lemma \ref{lem:matching} & $T_1(n) = O^\star((3\slash 4)^{\eps_1 n} 2^n)$ \\[2mm]
Lemma \ref{lem:half} & $3^{\eps_1 n} T_2(n) n^{\Oh(1)} = 3^{\eps_1 n} \left( 2^{(1-\eps_2)n} + \binom{n}{(1/2 - \eps_2)n} + 2^{\eps_2 n}\binom{(1-\eps_2)n}{n/2} \right) n^{\Oh(1)}$ \\[2mm]
Lemma \ref{lem:quarters1} & $10^{\eps_1 n} 2^{2\eps_2 n} T_3(n) n^{\Oh(1)} = 10^{\eps_1 n} 2^{2\eps_2 n} 2^{(1/2-\eps_3)n} \binom{(1/2+\eps_3)n}{n/4} \binom{n}{2\eps_1 n} n^{\Oh(1)}$ \\[2mm]
Lemma \ref{lem:quarters2} & $10^{\eps_1 n} 2^{2\eps_2 n} T_4(n) n^{\Oh(1)} = 10^{\eps_1 n} 2^{2\eps_2 n} 2^{(1/2+2\eps_1+2\eps_2+\eps_3)n} \binom{(1/2-2\eps_1-2\eps_2-\eps_3)n}{(1/4-\eps_4)n} \binom{n}{2\eps_1 n} n^{\Oh(1)}$ \\[2mm]
Lemma \ref{lem:finish-him} & $10^{\eps_1 n} 2^{2\eps_2 n} T_5(n) n^{\Oh(1)} = 10^{\eps_1 n} 2^{2\eps_2 n} \binom{n}{\eps_4 n}^2 2^{(3/4+\eps_3)n} n^{\Oh(1)}$ \\[1mm]
        \hline
\end{tabular}
\end{center}
\caption{Summary of running times of all cases of the algorithm.}
\label{table:summary}
\end{table}

Table \ref{table:summary} summarizes the running times of all cases of the algorithm.
Using the following values of the constants:
\begin{align*}
\eps_1 &= 2.677001953125 \cdot 10^{-10} \\
\eps_2 &=  0.00002724628851234912872314453125 \\
\eps_3 &=  0.007010121770270753069780766963958740234375 \\
\eps_4 &=  0.016526753505895047409353537659626454114913940429688 \\
\end{align*}
we get that the running time of our algorithm is bounded by:
$$\Oh\left(\left(2- 10^{-10}\right)^n\right).$$

\section{Conclusion}\label{sec:conc}

We presented an algorithm that solves \schedname{} in $\Oh((2-\eps)^n)$ time for some small $\eps$.
This shows that in some sense \schedname{} appears to be easier than resolving CNF-SAT formulae, which is conjectured to need $2^n$ time (the so-called Strong Exponential Time Hypothesis).
Our algorithm is based on an interesting property of the optimal solution expressed in Lemma \ref{lem:core}, which can be of independent interest.
However, our best efforts to numerically compute an optimal choice of values of the constants $\eps_k$, $k=1,2,3,4$ lead us
to an $\eps$ of the order of $10^{-10}$. Although Lemma \ref{lem:core} seems powerful, we lost a lot while applying it. In particular,
the worst trade-off seems to happen in Section \ref{sec:half}, where $\eps_1$ needs to be chosen much smaller than $\eps_2$.
The natural question is: can the base of the exponent be significantly improved?

\paragraph{Acknowledgements} We thank Dominik Scheder for very useful discussions on the \schedname{} problem during his stay in Warsaw. 
Moreover, we greatly appreciate the detailed comments of anonymous reviewers, especially regarding presentation issues and minor optimizations in our algorithm.

\bibliographystyle{plain}
\bibliography{sched}

\begin{thebibliography}{10}

\bibitem{irrset}
Daniel Binkele-Raible, Ljiljana Brankovic, Marek Cygan, Henning Fernau, Joachim
  Kneis, Dieter Kratsch, Alexander Langer, Mathieu Liedloff, Marcin Pilipczuk,
  Peter Rossmanith, and Jakub~Onufry Wojtaszczyk.
\newblock Breaking the 2$^{\mbox{n}}$-barrier for irredundance: Two lines of
  attack.
\newblock {\em Journal of Discrete Algorithms}, 9(3):214--230, 2011.

\bibitem{bjorklund-focs}
Andreas Bj{\"o}rklund.
\newblock Determinant sums for undirected hamiltonicity.
\newblock In {\em 51th Annual IEEE Symposium on Foundations of Computer Science
  (FOCS)}, pages 173--182. IEEE Computer Society, 2010.

\bibitem{bjohus:fourier}
Andreas Bj\"{o}rklund, Thore Husfeldt, Petteri Kaski, and Mikko Koivisto.
\newblock Fourier meets m\"{o}bius: fast subset convolution.
\newblock In {\em 39th Annual ACM Symposium on Theory of Computing (STOC)},
  pages 67--74, 2007.

\bibitem{bjohus:color}
Andreas Bj{\"o}rklund, Thore Husfeldt, and Mikko Koivisto.
\newblock Set partitioning via inclusion-exclusion.
\newblock {\em SIAM Journal of Computing}, 39(2):546--563, 2009.

\bibitem{sched:book}
Peter Brucker.
\newblock {\em Scheduling Algorithms}.
\newblock Springer, Heidelberg, 2 edition, 1998.

\bibitem{chekuri}
Chandra Chekuri and Rajeev Motwani.
\newblock Precedence constrained scheduling to minimize sum of weighted
  completion times on a single machine.
\newblock {\em Discrete Applied Mathematics}, 98(1-2):29--38, 1999.

\bibitem{cut-and-count}
Marek Cygan, Jesper Nederlof, Marcin Pilipczuk, Michal Pilipczuk, Johan M.~M.
  van Rooij, and Jakub~Onufry Wojtaszczyk.
\newblock Solving connectivity problems parameterized by treewidth in single
  exponential time.
\newblock In {\em 52nd Annual Symposium on Foundations of Computer Science
  (FOCS)}, pages 150--159. IEEE, 2011.

\bibitem{nasz-tcs}
Marek Cygan and Marcin Pilipczuk.
\newblock Exact and approximate bandwidth.
\newblock {\em Theoretical Computer Science}, 411(40-42):3701--3713, 2010.

\bibitem{capdomset}
Marek Cygan, Marcin Pilipczuk, and Jakub~Onufry Wojtaszczyk.
\newblock Capacitated domination faster than $\uppercase{O}(2^n)$.
\newblock {\em Information Processing Letters}, 111:1099--1103, 2011.

\bibitem{fomin-book}
Fedor Fomin and Dieter Kratsch.
\newblock {\em Exact Exponential Algorithms}.
\newblock Springer, 2010.

\bibitem{fgk:m-c-jacm}
Fedor~V. Fomin, Fabrizio Grandoni, and Dieter Kratsch.
\newblock A measure \& conquer approach for the analysis of exact algorithms.
\newblock {\em Journal of the ACM}, 56(5):1--32, 2009.

\bibitem{fomin-esa11}
Fedor~V. Fomin, Ioan Todinca, and Yngve Villanger.
\newblock Exact algorithm for the maximum induced planar subgraph problem.
\newblock In {\em 19th Annual European Symposium on Algorithms (ESA)}, volume
  6942 of {\em Lecture Notes in Computer Science}, pages 287--298. Springer,
  2011.

\bibitem{graham:jobshop}
R.~Graham.
\newblock Bounds for certain multiprocessing anomalies.
\newblock {\em Bell System Technical Journal}, 45:1563--1581, 1966.

\bibitem{job-shop:2}
N.~Hefetz and I.~Adiri.
\newblock An efficient optimal algorithm for the two-machines unit-time jobshop
  schedule-length problem.
\newblock {\em Mathematics of Operations Research}, 7:354--360, 1982.

\bibitem{seth}
Russell Impagliazzo and Ramamohan Paturi.
\newblock On the complexity of $k$-{SAT}.
\newblock {\em Journal of Computer and System Sciences}, 62(2):367--375, 2001.

\bibitem{lawler:sched}
E.~L. Lawler.
\newblock Optimal sequencing of a single machine subject to precedence
  constraints.
\newblock {\em Management Science}, 19:544--546, 1973.

\bibitem{job-shop:3}
J.~K. Lenstra, A.~H. G.~Rinnooy Kan, and P.~Brucker.
\newblock Complexity of machine scheduling problems.
\newblock {\em Annals of Discrete Mathematics}, 1:343--362, 1977.

\bibitem{lenstra}
J.~K. Lenstra and A.H.G.~Rinnooy Kan.
\newblock Complexity of scheduling under precedence constraints.
\newblock {\em Operations Research}, 26:22--35, 1978.

\bibitem{treewidth-lower}
Daniel Lokshtanov, Daniel Marx, and Saket Saurabh.
\newblock {Known Algorithms on Graphs of Bounded Treewidth are Probably
  Optimal}.
\newblock In {\em Twenty-Second Annual ACM-SIAM Symposium on Discrete
  Algorithms (SODA)}, pages 777--789, 2011.

\bibitem{margot}
Fran\c{c}ois Margot, Maurice Queyranne, and Yaoguang Wang.
\newblock Decompositions, network flows, and a precedence constrained
  single-machine scheduling problem.
\newblock {\em Operations Research}, 51(6):981--992, 2003.

\bibitem{mucha-sankowski:matching}
Marcin Mucha and Piotr Sankowski.
\newblock Maximum matchings via gaussian elimination.
\newblock In {\em 45th Symposium on Foundations of Computer Science (FOCS)},
  pages 248--255. IEEE Computer Society, 2004.

\bibitem{patrascu}
Mihai Patrascu and Ryan Williams.
\newblock On the possibility of faster {SAT} algorithms.
\newblock In {\em Twenty-First Annual ACM-SIAM Symposium on Discrete Algorithms
  (SODA)}, pages 1065--1075, 2010.

\bibitem{rooij:domsetesa09}
Johan M.~M. van Rooij, Jesper Nederlof, and Thomas~C. van Dijk.
\newblock Inclusion/exclusion meets measure and conquer.
\newblock In {\em 17th Annual European Symposium (ESA)}, volume 5757 of {\em
  Lecture Notes in Computer Science}, pages 554--565. Springer, 2009.

\bibitem{woeginger04}
Gerhard~J. Woeginger.
\newblock Space and time complexity of exact algorithms: Some open problems
  (invited talk).
\newblock In {\em First International Workshop on Parameterized and Exact
  Computation (IWPEC)}, volume 3162 of {\em Lecture Notes in Computer Science},
  pages 281--290. Springer, 2004.

\bibitem{woeginger08}
Gerhard~J. Woeginger.
\newblock Open problems around exact algorithms.
\newblock {\em Discrete Applied Mathematics}, 156(3):397--405, 2008.

\end{thebibliography}

\end{document}